\newcommand{\calA}{\ensuremath{\mathcal{A}}}
\newcommand{\calC}{\ensuremath{\mathcal{C}}}
\newcommand{\calJ}{\ensuremath{\mathcal{J}}}
\newcommand{\calM}{\ensuremath{\mathcal{M}}}
\newcommand{\calO}{\ensuremath{\mathcal{O}}}
\newcommand{\calX}{\ensuremath{\mathcal{X}}}
\newtheorem{lemma}{Lemma}[section]
\newtheorem{theorem}[lemma]{Theorem}
\newtheorem{definition}[lemma]{Definition}
\newtheorem{assumption}[lemma]{Assumption}
\newtheorem{remark}[lemma]{Remark}
\newcommand{\eps}{\varepsilon}
\newcommand{\indic}[1]{\mathbbm{1}_{#1}}
\renewcommand{\P}[2]{\mathbb{P}_{#1}\left[#2\right]}
\newcommand\restr[2]{{% we make the whole thing an ordinary symbol
  \left.\kern-\nulldelimiterspace % automatically resize the bar with \right
  #1 % the function
  \littletaller % pretend it's a little taller at normal size
  \right|_{#2} % this is the delimiter
  }}
\newcommand{\littletaller}{\mathchoice{\vphantom{\big|}}{}{}{}}
\newcommand{\arxiv}[1]{}
\title{Privately Counting Partially Ordered Data}
\author{Matthew Joseph\thanks{mtjoseph@google.com. Google Research.} \and Mónica Ribero\thanks{mribero@google.com, Google Research.} \and Alexander Yu\thanks{alexjyu@google.com. Google Research.}}
\begin{document}

\maketitle

\begin{abstract}
    We consider differentially private counting when each data point consists of $d$ bits satisfying a partial order. Our main technical contribution is a problem-specific $K$-norm mechanism that runs in time $O(d^2)$. Experiments show that, depending on the partial order in question, our solution dominates existing pure differentially private mechanisms, and can reduce their error by an order of magnitude or more.
\end{abstract}

\section{Introduction}
\label{sec:intro}
A differentially private (DP)~\citep{DMNS06} statistic incorporates randomness to obscure the contribution of any single data point. To achieve this, differentially private algorithms typically require --- and, if necessary, enforce --- restrictions on the data points. A common restriction is $\ell_p$ sensitivity: if the statistic $T$ is a vector in $\mathbb{R}^d$, adding a user's data point must not change the $\ell_p$ norm of $T$ by more than $\Delta_p$. This framework is generic enough to apply to a wide variety of problems, but it can yield relatively poor performance for statistics whose sensitivity is not tightly characterized by an $\ell_p$ norm. One such statistic is counting on \emph{partially ordered data}.

As a straightforward running example, the National Health Interview Survey~\citep{N24} has been administered annually since 1957 by the Center for Disease Control. The Hypertension section of the 2024 survey first asks if the respondent has been told they have hypertension and then asks if they have been told multiple times. The survey structure requires a respondent to answer ``yes'' to the first question in order to answer ``yes'' to the second; equivalently, if the answers are given as binary vector $x$, the survey structure imposes partial order $x_2 \leq x_1$. A count of this partially ordered data then records the number of ``yes'' responses to each question. Other examples of partially ordered data include software library dependencies, coursework prerequisites, and any data encoded using directed acyclic graphs.

It is possible to apply $\ell_p$ sensitivity to partially ordered data. In the survey example, a respondent can answer ``yes'' to every question, so the vector of counts has $\ell_1$ sensitivity $d$. However, this analysis also protects against inputs that cannot occur, as the binary and order constraints means that no response can have the form $(d, 0, \ldots, 0)$ or $(0, 1, \ldots)$. In this sense, straightforward application of Laplace (or other $\ell_p$ mechanism) noise may be ``loose''.

\subsection{Contributions}
\label{subsec:contributions}
Our main technical contribution is an efficient implementation of the $K$-norm mechanism~\citep{HT10} for counting partially ordered data. Our solution can be applied to any partial order, satisfies pure differential privacy, and runs in time $O(d^2)$ (\Cref{thm:main}). This provides a more than cubic speedup over the fastest general polytope sampler and is provably exponentially faster than any $\ell_p$ ball rejection sampler (\Cref{thm: rejection sampling is inefficient}). Experiments using both synthetic and real-world partial orders demonstrate significant error reductions over existing DP mechanisms (\Cref{sec:experiments}).

\subsection{Related Work}
\label{subsec:related_work}
\citet{JY24} previously constructed efficient implementations of the $K$-norm mechanism to obtain better noise distributions for the problems of (contribution-bounded) sum, count, and vote. For a longer discussion of the $K$-norm mechanism, its variants, and its relation to other query-answering solutions like the projection~\citep{NTZ13, N23B}, matrix~\citep{LMHMR15, MMHM18}, and factorization~\citep{ENU20, NT23} mechanisms, we refer the interested reader to their paper; here, we note that these mechanisms can be viewed as general but possibly slow approximations of optimal constructions of problem-specific private additive noise. In contrast, we provide a problem-specific but optimal and fast construction.

The simplest difference between our work and that of~\citet{JY24} is that we consider different problems that require different sampling techniques. In more detail, these different problems are characterized by different combinatorial objects. In~\citet{JY24}, their sum and vote polytopes are based on a truncated hypercube and permutohedron, respectively. Here, the poset polytope is most closely related to the double poset polytope. Similarly, though our samplers are also based on identifying an efficiently sampleable triangulation of the polytope, the triangulations themselves are different: the sum polytope triangulation is indexed by permutations with a fixed number of ascents, while ours is indexed by pairs of non-interfering chains in the Birkhoff lattice.
\section{Preliminaries}
\label{sec:prelims}

\subsection{Differential Privacy}
\label{subsec:prelims_dp}

The material in this subsection is largely adapted from the preliminaries of~\citet{JY24}, who derived efficient instances of the $K$-norm mechanism for different problems. We use pure differential privacy, in the add-remove model. 

\begin{definition}[\citet{DMNS06}]
\label{def:dp}
    Databases $X,X'$ from data domain $\calX$ are \emph{neighbors} $X \sim X'$ if they differ in the presence or absence of a single record. A randomized mechanism $\calM:\calX \to \calO$ is \emph{$\eps$-differentially private} (DP) if for all $X \sim X'\in \calX$ and any $S\subseteq \calO$,
    \begin{equation*}
        \P{\calM}{\calM(X) \in S} \leq e^{\eps}\P{\calM}{\calM(X') \in S}.
    \end{equation*}
\end{definition}

Our solution will apply the $K$-norm mechanism.

\label{subsec:prelim_k_norm}
\begin{lemma}[\citet{HT10}]
\label{lem:k_norm_dp}
    Given statistic $T$ with $\|\cdot\|$-sensitivity $\Delta$ and database $X$, the \emph{$K$-norm mechanism} has output density $f_X(y) \propto \exp\left(-\frac{\eps}{\Delta} \cdot \|y - T(X)\|\right)$ and satisfies $\eps$-DP. 
\end{lemma}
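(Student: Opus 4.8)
The plan is to reduce the distributional inequality of \Cref{def:dp} to a pointwise bound on the ratio of output densities, which is the standard route for a mechanism specified by an explicit density. First I would check that $f_X$ is a genuine probability density. Writing $f_X(y) = Z^{-1}\exp\!\left(-\tfrac{\eps}{\Delta}\|y - T(X)\|\right)$ with $Z = \int_{\mathbb{R}^d}\exp\!\left(-\tfrac{\eps}{\Delta}\|z\|\right)dz$, one sees $Z < \infty$ by equivalence of norms on $\mathbb{R}^d$: there is $a>0$ with $\|z\| \ge a\|z\|_2$, and $\int_{\mathbb{R}^d}\exp\!\left(-\tfrac{\eps a}{\Delta}\|z\|_2\right)dz < \infty$. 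The key point is that $Z$ does not depend on the database: the substitution $z = y - T(X)$ is a translation and Lebesgue measure is translation-invariant, so the same constant $Z$ normalizes $f_X$ for every $X$.

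Next I would bound the density ratio. For any $y \in \mathbb{R}^d$ and neighbors $X \sim X'$,
\[
  \frac{f_X(y)}{f_{X'}(y)} = \exp\!\left(-\frac{\eps}{\Delta}\bigl(\|y - T(X)\| - \|y - T(X')\|\bigr)\right).
\]
By the reverse triangle inequality, $\bigl|\,\|y - T(X)\| - \|y - T(X')\|\,\bigr| \le \|T(X) - T(X')\|$, and by the definition of $\|\cdot\|$-sensitivity, $\|T(X) - T(X')\| \le \Delta$. Hence the exponent is at most $\tfrac{\eps}{\Delta}\cdot\Delta = \eps$, giving $f_X(y) \le e^{\eps} f_{X'}(y)$ pointwise; note $f_{X'}(y) > 0$ for all $y$ since $\|\cdot\|$ is a norm, so the ratio is always well-defined.

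Finally I would integrate the pointwise bound: for any measurable $S \subseteq \mathbb{R}^d$,
\[
  \P{\calM}{\calM(X) \in S} = \int_S f_X(y)\,dy \le \int_S e^{\eps} f_{X'}(y)\,dy = e^{\eps}\,\P{\calM}{\calM(X') \in S},
\]
which is precisely $\eps$-DP. I expect no real obstacle: the only steps needing care are verifying that the normalizing constant is finite and database-independent (handled by translation invariance of Lebesgue measure) and applying the reverse triangle inequality in the correct direction; the remainder is a one-line calculation, and the argument is symmetric in $X$ and $X'$.
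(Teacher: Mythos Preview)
Your argument is correct and is exactly the standard proof: bound the pointwise density ratio via the reverse triangle inequality and the sensitivity assumption, having first observed that the normalizing constant is finite and database-independent by translation invariance. There is nothing to compare against in the paper itself, however: the paper does not prove \Cref{lem:k_norm_dp} but simply cites it as a known result from \citet{HT10}, so your write-up supplies a proof where the paper offers none.
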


We apply an instance of the $K$-norm mechanism chosen using our statistic's \emph{sensitivity space}.

\begin{definition}[\citet{KN16, AS21}]
    The \emph{sensitivity space} of \\ statistic $T$ is $S(T) = \{T(X) - T(X') \mid X, X' \text{ are neighboring databases}\}$.
\end{definition}

Sensitivity spaces that induce norms are particularly interesting, as the corresponding $K$-norm mechanisms enjoy certain notions of optimality.

\begin{lemma}
\label{lem:induced_norm}
    If set $W$ is convex, bounded, absorbing (for every $u \in \mathbb{R}^d$, there exists $c > 0$ such that $u \in cW$), and symmetric around 0 ($u \in W \Leftrightarrow -u \in W$), then the function $\|\cdot\|_W \colon \mathbb{R}^d \to \mathbb{R}_{\geq 0}$ given by $\|u\|_W = \inf\{c \in \mathbb{R}_{\geq 0} \mid u \in cW\}$ is a norm, and we say $W$ induces $\|\cdot\|_W$.
\end{lemma}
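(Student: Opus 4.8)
The plan is to verify the three norm axioms directly from the Minkowski-gauge definition, after two preliminary observations. The first is that $0 \in W$: by the absorbing property there is some $c > 0$ with $u \in cW$, and symmetry gives $-u \in cW$, so convexity yields $0 = \tfrac12(u/c) + \tfrac12(-u/c) \in W$. The second is that, for each $u \in \mathbb{R}^d$, the set $C_u := \{c > 0 \mid u \in cW\}$ is nonempty (absorbing) and is \emph{upward closed}: if $c \in C_u$ and $c' > c$, then $u/c' = (c/c')(u/c) + (1 - c/c')\cdot 0$ is a convex combination of points of $W$, hence lies in $W$, so $c' \in C_u$. Together these show $\|u\|_W = \inf C_u$ is a finite nonnegative real number and, crucially, that $a > \|u\|_W$ implies $a \in C_u$ (i.e.\ $u \in aW$) — the fact that makes the infimum easy to manipulate.

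Given this, homogeneity is routine: for $\lambda > 0$, scaling by $\lambda$ is a bijection $C_u \to C_{\lambda u}$, so $\|\lambda u\|_W = \lambda\|u\|_W$; for $\lambda = 0$ both sides are $0$ (using $0 \in W$, so $C_0 = (0,\infty)$); and for $\lambda < 0$ we combine the positive case with $\|-u\|_W = \|u\|_W$, which holds because symmetry gives $C_u = C_{-u}$. For the triangle inequality, fix $u,v$ and $\eps > 0$, choose $a \in C_u$ with $a < \|u\|_W + \eps$ and $b \in C_v$ with $b < \|v\|_W + \eps$ (possible by definition of infimum and upward closure), and note
\[
  \frac{u+v}{a+b} = \frac{a}{a+b}\cdot\frac{u}{a} + \frac{b}{a+b}\cdot\frac{v}{b} \in W
\]
by convexity, so $a+b \in C_{u+v}$ and hence $\|u+v\|_W \le a+b < \|u\|_W + \|v\|_W + 2\eps$; letting $\eps \to 0$ finishes it.

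The one step that genuinely uses the remaining hypothesis — boundedness — is the nontrivial direction of point separation. Clearly $\|0\|_W = 0$. Conversely, suppose $u \ne 0$ but $\|u\|_W = 0$; then by upward closure $u \in cW$, i.e.\ $u/c \in W$, for \emph{every} $c > 0$, so $W$ contains the vectors $u/c$ whose Euclidean norms $\|u\|_2/c$ diverge as $c \to 0^+$, contradicting boundedness. Hence $\|u\|_W = 0$ only when $u = 0$. I expect the ``main obstacle'' here to be purely one of bookkeeping: each of the four hypotheses is used exactly once — convexity for upward closure and subadditivity, absorbing for finiteness, symmetry for negative homogeneity, boundedness for point separation — so the actual content of the proof is just making sure the infimum defining $\|\cdot\|_W$ is well behaved, which is exactly the upward-closure observation.
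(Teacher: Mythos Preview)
Your proof is correct. The paper does not actually prove this lemma; it states it as a standard fact about Minkowski functionals and moves on immediately to \Cref{def:induced_norm}. Your direct verification of the norm axioms is the canonical argument, and your bookkeeping is clean: each hypothesis is invoked exactly where it is needed, and the upward-closure observation is precisely the device that makes the infimum tractable. One cosmetic remark: the paper's definition allows $c \in \mathbb{R}_{\geq 0}$ rather than $c > 0$, but since $0 \cdot W = \{0\}$ this only affects the trivial case $u = 0$, where both conventions give $\|0\|_W = 0$, so your restriction to $C_u = \{c > 0 \mid u \in cW\}$ is harmless.
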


\begin{definition}
\label{def:induced_norm}
    If statistic $T$ has a sensitivity space convex hull $CH(S(T))$ satisfying \Cref{lem:induced_norm}, we say $T$ \emph{induces a norm}, and call $CH(S(T))$ the \emph{induced norm ball} of $T$.
\end{definition}

\citet{AS21} show that the instance of the $K$-norm mechanism using a statistic's induced norm is optimal with respect to entropy and conditional variance (see Sections 3.2 and 3.3 of their paper for details). The only remaining challenge is running the resulting mechanism. 

\begin{lemma}[\citet{HT10}]
\label{lem:k_norm_sample}
    Running the $K$-norm mechanism reduces to sampling the unit ball for the norm $\|\cdot\|$.
\end{lemma}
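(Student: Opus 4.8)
The plan is to reduce sampling from the $K$-norm density $f_X(y) \propto \exp\!\left(-\tfrac{\eps}{\Delta}\|y - T(X)\|\right)$ to the combination of (i) one draw from the unit ball $K = \{z : \|z\| \le 1\}$ and (ii) one draw from a univariate Gamma distribution, which is a cheap standard primitive. First I would translate coordinates: since $f_X(y)$ depends on $y$ only through $z := y - T(X)$, it suffices to sample $z$ from the density $g(z) \propto \exp(-c\|z\|)$ with $c = \eps/\Delta$ and then add $T(X)$.

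The core step is a generalized polar-coordinates change of variables adapted to the norm $\|\cdot\|$. Writing every nonzero $z \in \mathbb{R}^d$ uniquely as $z = ru$ with $r = \|z\| > 0$ and $u = z/\|z\| \in \partial K$, I would show that Lebesgue measure pushes forward under $z \mapsto (r,u)$ to the product measure $r^{d-1}\,dr \otimes \mu$, where $\mu$ is the \emph{cone measure} on $\partial K$ (for the Euclidean ball this is the familiar $r^{d-1}\,dr\,d\sigma$; for a general symmetric convex body it is the standard cone-volume identity, obtained by applying the scaling $z \mapsto rz$ and differentiating $\lambda(rK) = r^d\lambda(K)$). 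Consequently $g$ pulls back to something proportional to $r^{d-1}e^{-cr}\,dr \otimes \mu$, which factorizes: the radial marginal is, up to normalization, the density of $\gammad{d}{c}$ (shape $d$, rate $c$), and the directional marginal is the normalized cone measure $\bar\mu := \mu/\mu(\partial K)$, independent of the radius.

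Next I would identify $\bar\mu$ with the law of $V/\|V\|$ for $V$ drawn uniformly from $K$. Applying the same change of variables to the uniform density $\lambda(K)^{-1}\indic{z \in K}$ shows that $\|V\|$ has density $d\,r^{d-1}$ on $[0,1]$, that $V/\|V\|$ has law $\bar\mu$, and that the two are independent. Hence the \emph{direction} of a uniform draw from the unit ball is exactly the direction we need; we discard its (Beta-distributed) radius and substitute an independent $\gammad{d}{c}$ radius. This yields the sampler: draw $V$ uniformly from the unit ball, draw $R \sim \gammad{d}{\eps/\Delta}$ independently, and output $T(X) + R\cdot V/\|V\|$; by the factorization above this has density $f_X$, which is the claimed reduction.

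The only non-elementary ingredient is the polar-coordinate pushforward identity together with the verification that radius and direction are independent with the stated marginals; once that is in hand the rest is bookkeeping. Since the Gamma draw is an $O(1)$-per-sample operation, the entire cost of running the mechanism is that of sampling the unit ball for $\|\cdot\|$, as asserted. The main obstacle is thus purely the measure-theoretic change of variables and its factorization — conceptually routine, but the step where care is required.
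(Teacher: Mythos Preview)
The paper does not prove this lemma; it simply quotes it from \citet{HT10}. Your argument is correct and is one of the two standard derivations. The version in \citet{HT10} is marginally cleaner: rather than normalizing $V$ and resampling the radius, one writes $e^{-c\|z\|} = \int_0^\infty c\,e^{-cr}\,\indic{\|z\|\le r}\,dr$, reads off that $(r,z)$ has joint density proportional to $e^{-cr}\indic{z\in rK}$, and concludes that $r\sim\gammad{d+1}{c}$ with $z\mid r$ uniform on $rK$. That route outputs $T(X)+r\cdot V$ for $V$ uniform on $K$ and never evaluates $\|V\|$, so it literally uses only a ball sample and a Gamma draw; your version additionally needs to compute $\|V\|$, which is a small extra primitive beyond ``sampling the unit ball'' (harmless in this paper, since the poset norm is easy to evaluate, but worth noting).
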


Our main technical contribution is therefore a fast sampler for the induced norm ball for counting partially ordered data. This is a $d$-dimensional polytope with $\Omega(d)$ constraints. Applying the best general sampler takes time $\tilde O(d^{2+\omega})$ where $\omega \geq 2$ is the matrix multiplication exponent (Theorem 1.5 of~\citet{LLV20}), and achieving a close enough approximation for $O(\eps)$-DP adds another $O(d)$ factor (Appendix A of~\citet{HT10}).

\subsection{Partially Ordered Data}
\label{subsec:prelims_poset}
Throughout, we assume that our goal is to compute a sum $T(X) = \sum_{x \in X} x$ of length-$d$ binary vectors with bits that satisfy a partial order.

\begin{definition}
\label{def:poset}
    A \emph{partially ordered set} or \emph{poset} $(P, \preceq)$ is a finite set $P$ together with a reflexive, transitive, and anti-symmetric relation $\preceq$. If $p_1, p_2 \in P$ and $p_1 \preceq p_2$, we say that $p_1$ is a \emph{child} of $p_2$ and $p_2$ is a \emph{parent} of $p_1$. A \emph{linear extension} of poset $(P, \preceq)$ is a total ordering $(P, \preceq^*)$ that preserves $\preceq$, i.e. an ordering of the elements of $P$ given by $p_1 \preceq^* p_2 \preceq^* \ldots \preceq^* p_d$ such that $ p_i \preceq p_j$ implies $i < j$.  We assume that a poset is given in the form of a $|P| \times |P|$ binary matrix $M$ where $M_{ij} = 1 \Leftrightarrow p_i \preceq p_j$. We say a length-$|P|$ binary vector $x$ is \emph{partially ordered} if $p_i \preceq p_j$ implies $x_i \leq x_j$.
\end{definition}

We also assume that each poset has a single root.

\begin{assumption}
\label{assm: root}
    The poset $(P, \preceq)$ has a root $r \in P$ such that $p \preceq r$ for all $p \in P$.
\end{assumption}

For surveys, this corresponds to a question asking if the respondent wants to take the survey at all, and allows us to count nonrespondents. Our mechanism is optimal for these posets in the sense described by~\citet{AS21}. Note also that if we are instead given a poset without a root, we can simply add a root, apply our mechanism, and ignore the added dimension in the final output, as done in our experiments. By post-processing, this does not alter the privacy guarantee.
\section{An Efficient Mechanism}

\subsection{Poset Ball}
\label{subsec:poset_ball}

As described in \Cref{subsec:prelims_dp}, an efficient application of the optimal $K$-norm mechanism for counting partially ordered data reduces to sampling the norm ball induced by its sensitivity space. We start with some basic results about the structure of this norm ball. Throughout, we denote our original poset by $P^*$. Recall from \Cref{assm: root} that we assume the poset $(P^*, \preceq)$ ordering our data points has a root $r$ such that $p \preceq r$ for all $p \in P^*$; we omit this assumption for the rest of this section.

\begin{lemma}
    Let $T_{(P^*, \preceq)}(X) = \sum_{x \in X} x$ be a statistic summing data points $x$ that satisfy a partial order  $(P^*, \preceq)$ (\Cref{def:poset}). Then $T_{(P^*, \preceq)}$ induces a norm.
\end{lemma}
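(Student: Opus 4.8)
The plan is to verify the four conditions of \Cref{lem:induced_norm} for $W = CH(S(T_{(P^*, \preceq)}))$, namely that this set is convex, bounded, absorbing, and symmetric around $0$; by \Cref{def:induced_norm} this is exactly what it means for $T_{(P^*, \preceq)}$ to induce a norm. First I would pin down the sensitivity space itself. Since $T_{(P^*, \preceq)}(X) = \sum_{x \in X} x$, adding or removing a single record $x$ changes the statistic by exactly $\pm x$, where $x$ ranges over all partially ordered binary vectors (the set of valid data points, which includes the zero vector if we allow a nonrespondent). Hence $S(T_{(P^*, \preceq)}) = \{\, \pm x : x \in \{0,1\}^d,\ p_i \preceq p_j \Rightarrow x_i \le x_j \,\}$, a finite set, and $W = CH(S(T_{(P^*, \preceq)}))$ is the convex hull of this finite, centrally symmetric point set.

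Convexity is immediate since $W$ is by definition a convex hull. Symmetry around $0$ follows because $S(T_{(P^*, \preceq)})$ is itself symmetric ($x$ is a valid data point iff $-(-x)=x$ is, so $v \in S \Leftrightarrow -v \in S$), and the convex hull of a centrally symmetric set is centrally symmetric. Boundedness holds because $S(T_{(P^*, \preceq)}) \subseteq [-1,1]^d$ is a finite set, so its convex hull lies in the same bounded box. The only condition requiring a genuine argument is that $W$ is absorbing, i.e.\ that $W$ has nonempty interior / spans $\mathbb{R}^d$, equivalently that $S(T_{(P^*, \preceq)})$ is not contained in any proper linear subspace.

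I expect the absorbing condition to be the main (though still modest) obstacle, and here is where \Cref{assm: root} is used. It suffices to exhibit $d$ linearly independent vectors among the valid data points. Fix a linear extension $p_1 \preceq^* p_2 \preceq^* \cdots \preceq^* p_d$ of $(P^*, \preceq)$ (\Cref{def:poset}), and for each $k \in \{1,\dots,d\}$ let $x^{(k)}$ be the indicator vector of the up-set generated by $p_k$, i.e.\ $x^{(k)}_i = 1$ iff $p_k \preceq p_i$. Each $x^{(k)}$ is a valid partially ordered vector (up-sets are exactly the feasible $0/1$ configurations), so $x^{(k)} \in S(T_{(P^*, \preceq)})$; note the root assumption guarantees these are all nonzero and in particular $x^{(1)}$ corresponding to a minimal element can be taken to be a single coordinate. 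A clean way to finish is to order these vectors by a linear extension reversed: process elements from maximal to minimal, observe that the up-set of $p_k$ contains $p_k$ but, when $p_k$ is maximal among the not-yet-processed elements, contains no not-yet-processed element other than $p_k$ itself; this gives the family $\{x^{(k)}\}$ a triangular structure with respect to a suitable coordinate ordering, hence they are linearly independent and $W$ spans $\mathbb{R}^d$. (Alternatively one can simply note the up-set indicator vectors include the all-ones vector and, by removing minimal elements one at a time along a linear extension, a chain of feasible vectors whose successive differences are the standard basis vectors, again yielding full rank.) With all four conditions of \Cref{lem:induced_norm} checked, $W$ induces the norm $\|\cdot\|_W$ and $T_{(P^*, \preceq)}$ induces a norm in the sense of \Cref{def:induced_norm}.
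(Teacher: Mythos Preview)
Your proposal is correct and follows the same overall template as the paper: verify convexity, boundedness, and symmetry directly, then do a little work for the absorbing property. The only place your argument differs in substance is the absorbing step. The paper passes to a \emph{linear extension} $(P^*,\preceq')$, observes $S(T_{(P^*,\preceq')}) \subset S(T_{(P^*,\preceq)})$, notes that the valid binary vectors for a total order (the ``staircase'' vectors $v_1,\dots,v_d$) form a basis of $\mathbb{R}^d$, and then pulls back a small ball through the linear map sending them to the standard basis (equivalently, to the $\ell_1$ ball). You instead stay in the original poset and use the indicators of the principal filters $\{p : p_k \preceq p\}$, showing they are upper triangular with unit diagonal in linear-extension coordinate order. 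These are essentially the same idea---your parenthetical alternative (successively removing minimal elements along a linear extension) is exactly the paper's construction---and your version is arguably a touch more direct since it avoids the detour through a sub-poset and an explicit $\ell_1$ identification.

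One small correction: you flag the absorbing step as ``where \Cref{assm: root} is used,'' but in fact neither your argument nor the paper's needs the root. The principal filter of $p_k$ always contains $p_k$, so your $x^{(k)}$ are nonzero and your triangularity argument goes through regardless; the paper likewise makes no use of the root in this lemma (the root assumption is invoked only later, to identify the poset ball with a double order polytope).
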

\begin{proof}
    We verify the conditions in \Cref{lem:induced_norm}. We write $T = T_{(P^*, \preceq)}$ for brevity.
    
    $CH(S(T))$ is a convex hull and so immediately convex. Let $V(CH(S(T)))$ be the set of vertices of $CH(S(T))$. By the definition of sensitivity space, the vertices of $CH(S(T))$ are the length-$|P^*|$ partially ordered binary vectors and their negations. There are finitely many vertices, so $CH(S(T))$ is bounded. Any point $p \in CH(S(T))$ is a convex combination of the vertices $\sum c_{i}v_{i}$, and $v_{i} \in V(CH(S(T)))$ implies $-v_{i} \in V(CH(S(T)))$, so $-p = \sum c_{i}(-v_{i}) \in CH(S(T))$. Thus, $CH(S(T))$ is symmetric around the origin.
    
    It remains to show $CH(S(T))$ is absorbing. Consider any linear extension $(P^*, \preceq')$ of  $(P^*, \preceq)$. Then $(P^*, \preceq')$ has all of the relations in  $(P^*, \preceq)$, so $S(T_{(P^*, \preceq')}) \subset S(T_{(P^*, \preceq)})$, and $CH(S(T_{(P^*, \preceq')})) \subset CH(S(T_{(P^*, \preceq)}))$. Define $\{v_1, \ldots, v_d\}$ to be the set of binary vectors where $v_i$ is the vector whose first $i$ entries are 1 while the rest are 0. Then $V(CH(S(T_{(P^*, \preceq')})))$ is $\{\pm v_1,..., \pm v_d\}$. Since $\{v_1,...,v_d\}$ forms a basis of $\mathbb{R}^d$, there is an invertible linear map $A$ mapping $\{v_1,...,v_d\}$ to the standard basis. In particular, $A(CH(S(T_{(P^*, \preceq')})))$ is the unit $\ell_{1}$ ball $B_{1}^{d}$. Let $b$ be a small origin centered ball around 0 in $B_{1}^{d}$. Then $A^{-1}(b)$ is a small origin centered ellipsoid in $CH(S(T_{(P^*, \preceq')}))$, i.e. 0 is an interior point of $CH(S(T_{(P^*, \preceq')}))$. Since $S(T_{(P^*, \preceq')}) \subset S(T_{(P^*, \preceq)})$, it follows that $CH(S(T_{(P^*, \preceq)}))$  is absorbing.
\end{proof}

We shorthand $T$'s induced norm ball as the \emph{poset ball}. The next result shows that the poset ball can be reinterpreted as an object from combinatorics, the double order polytope. This interpretation allows us to apply results about the double order polytope from~\citet{CFS17}. Reasoning about the double order polytope requires some additional definitions related to posets.

\begin{definition}
\label{def:poset_etc}
    A \emph{double poset} is a triple $(P, \preceq_{+}, \preceq_{-})$ where $P_{+} = (P, \preceq_{+})$ and $P_{-} = (P, \preceq_{-})$ are posets on the same ground set.  A double poset is \emph{compatible} if $P_{+}$ and $P_{-}$ have a shared linear extension.
\end{definition}

\begin{definition}[\cite{CFS17}]
    Let $\mathcal{P}_1, \mathcal{P}_2 \subset \mathbb{R}^{d}$ be polytopes. The \emph{Cayley sum} of $\mathcal{P}_1$ and $\mathcal{P}_2$ is defined by ${\mathcal{P}_1 \boxplus \mathcal{P}_2 = CH(\{1\} \oplus \mathcal{P}_{1} \cup \{-1\} \oplus \mathcal{P}_{2})}$. Define $\mathcal{P}_1 \boxminus \mathcal{P}_2 = \mathcal{P}_1 \boxplus -\mathcal{P}_2$. The \emph{order polytope} $\mathcal{O}(P)$ of a poset $(P, \preceq)$ is the set of all order preserving functions $f:P\rightarrow[0,1]$ such that $0 \leq f(a) \leq f(b) \leq 1$ for all $a,b \in P$ where $a \preceq b$. The \emph{double order polytope} $\mathcal{O}_{2}(P)$ of a double poset $(P, \preceq_{+}, \preceq_{-})$ is $\mathcal{O}(P_{+}) \boxminus \mathcal{O}(P_{-})$.\footnote{This is a slight modification of the definition of the double order polytope given by \citet{CFS17}, but the change does not meaningfully affect the properties of this structure (see \Cref{remark:2} for details).} 
\end{definition}

We can now state the connection between the poset ball and double order polytope.

\begin{lemma}
\label{lem:poset_ball_double_order_polytope}
     The poset ball for poset $(P^*, \preceq)$ is $\calO_{2}(P^*-r)$, the double order polytope on the double order poset $(P^*-r, \preceq, \preceq)$.
\end{lemma}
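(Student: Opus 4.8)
The plan is to prove equality by showing that the two polytopes have the same vertex set: since every polytope is the convex hull of its vertices, this suffices. Throughout I work in the natural identification $\mathbb{R}^{P^*} \cong \mathbb{R} \times \mathbb{R}^{P^*-r}$ under which the extra ``Cayley'' coordinate prepended by $\oplus$ corresponds to the root coordinate $x_r$, and the remaining coordinates are shared; lining up these coordinates correctly is exactly where \Cref{assm: root} enters.

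\textbf{Step 1: the vertices of the poset ball.} By definition the poset ball is $CH(S(T))$, where $S(T) = \{\pm x \mid x \text{ is a } (P^*,\preceq)\text{-partially ordered binary vector}\}$. Writing such an $x$ as $(x_r, (x_p)_{p \neq r})$, the root relations $x_p \leq x_r$ force $x_r = 1$ whenever $x \neq \mathbf{0}$, and then $(x_p)_{p\neq r}$ is precisely the indicator $\indic{F}$ of an order filter $F$ of $P^* - r$ (a subset closed under passing to parents), with every such $F$ occurring. Since $\mathbf{0} = \tfrac12(1,\mathbf{0}) + \tfrac12(-1,\mathbf{0})$ lies in the convex hull of the remaining points, $CH(S(T)) = CH\{\pm(1,\indic{F}) \mid F \text{ an order filter of } P^*-r\}$. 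Each of these points is a vertex: a convex combination of them with first coordinate equal to $1$ can only place weight on points with first coordinate $1$, i.e.\ on points $(1,\indic{F'})$, and matching the remaining coordinates then forces $F' = F$ in every term. Hence the vertex set of the poset ball is exactly $\{\pm(1,\indic{F}) \mid F \text{ an order filter of } P^*-r\}$.

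\textbf{Step 2: the vertices of $\mathcal{O}_2(P^*-r)$.} Unwinding the definitions, $\mathcal{O}_2(P^*-r) = \mathcal{O}(P^*-r) \boxminus \mathcal{O}(P^*-r) = CH\big(\{1\}\oplus\mathcal{O}(P^*-r) \,\cup\, \{-1\}\oplus(-\mathcal{O}(P^*-r))\big)$. It is a classical fact that, in the sign convention fixed above, the vertices of the order polytope $\mathcal{O}(P^*-r)$ are exactly the indicators $\indic{F}$ of its order filters $F$; hence $\{1\}\oplus\mathcal{O}(P^*-r)$ has vertices $(1,\indic{F})$ and $\{-1\}\oplus(-\mathcal{O}(P^*-r))$ has vertices $(-1,-\indic{F})$. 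These two polytopes lie in the parallel, disjoint hyperplanes $\{x_r = 1\}$ and $\{x_r = -1\}$, so no convex combination mixing points of the two can land in either; consequently the vertex set of their convex hull is exactly the union of the two vertex sets, namely $\{\pm(1,\indic{F}) \mid F \text{ an order filter of } P^*-r\}$. This agrees with the vertex set found in Step 1, so the two polytopes coincide.

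The argument is mostly bookkeeping; the points that actually need care are (i) aligning the prepended Cayley coordinate with the root coordinate, and (ii) recalling the vertex description of $\mathcal{O}(P)$ in the orientation used here — together with the small observation that $\mathbf{0} \in S(T)$, although listed among the ``partially ordered binary vectors,'' is interior rather than extreme. The only genuinely geometric step, that a Cayley sum of two polytopes sitting in parallel disjoint hyperplanes has vertex set equal to the union of their vertex sets, is elementary and short.
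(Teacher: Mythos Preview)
Your proof is correct and follows essentially the same route as the paper's: both arguments identify the vertex set of each polytope as $\{\pm(1,\indic{F}) \mid F \text{ a filter of } P^*-r\}$ and conclude equality. The paper packages the ``two pieces in parallel hyperplanes, so vertices are the union'' step as a separate claim about $CH(\mathcal{P}' \cup -\mathcal{P}')$ proved in an appendix, whereas you handle it inline, but the content is the same.
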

\begin{proof}
    Let $S$ be the set of functions  $f: P^* \rightarrow \{0,1\}$ such that $0 \leq f(a) \leq f(b) \leq 1$ for all $a,b \in P^*$ where $a \preceq b$. We can view functions in $S$ as  $|P^*|$-dim binary vectors in the natural way. Recall that the poset ball is $CH(S \cup -S)$. Equivalently, it is $CH(\mathcal{O}(P^*) \cup -\mathcal{O}(P^*))$. We apply the following general result about the convex hulls of reflected polytopes. A similar result is stated without proof in~\citet{CFS17}; for completeness, a proof appears in \Cref{sec:appendix_sampler}.
    \begin{restatable}{claim}{reflectedPolytopeVertices}
    \label{claim: characterizing the vertices of double order polytope}
        Let $\mathcal{P}$ be a $(d-1)$-dim convex polytope in $\mathbb{R}^{d}$ lying in the hyperplane $x_1 = 1$. Let $\mathcal{P}' = \mathcal{P} \cup \{0\}$. Then $V(CH(\mathcal{P}' \cup -\mathcal{P}')) = V(\mathcal{P}) \cup V(-\mathcal{P})$.
    \end{restatable}
    By \Cref{claim: characterizing the vertices of double order polytope}, the vertices of $CH(\mathcal{O}(P^*) \cup -\mathcal{O}(P^*))$  are $V(\mathcal{O}(P^*)) \cup V(-\mathcal{O}(P^*)) - \{0\}$. The vertices in $V(\mathcal{O}(P^*)) - \{0\}$ are of the form ${1} \oplus v$ where $v$ is a vertex of $V(\mathcal{O}(P^* - r))$, and the first dimension corresponds to the $r$-dimension. In other words, the vertices of this shape are exactly the vertices of $\mathcal{O}_{2}(P^* - r)$ , the double order polytope on the double poset $(P^* - r, \preceq, \preceq)$. 
\end{proof}

By \Cref{lem:poset_ball_double_order_polytope}, our new goal is to construct a fast sampler for the double order polytope $\calO_2(P^*-r)$. 

\subsection{Double Order Polytopes and Double Chain Polytopes}

We sample $\calO_2(P^*-r)$  by constructing a triangulation that decomposes it into disjoint $d$-dimensional simplices of equal volume. By this construction, it suffices to sample one of the simplices uniformly at random and then return a uniform random sample from the chosen simplex. The following result about uniformly sampling a simplex is folklore; we take the statement from~\citet{JY24}.

\begin{restatable}{lemma}{simplexSample}
\label{lem:simplex_sample}
    A collection of points $x_0, \ldots, x_{d} \in \mathbb{R}^n$ with $n \geq d$ are \emph{affinely independent} if $\sum_{i=0}^d \alpha_{i}x_i = 0$ and $\sum_{i=0}^d \alpha_i = 0$ implies $\alpha = 0$. A \emph{$d$-simplex} is the convex hull of $d+1$ affinely independent points and can be uniformly sampled in time $O(d\log(d))$.
\end{restatable}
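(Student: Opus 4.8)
The plan is to reduce uniform sampling of the $d$-simplex $S = CH(\{x_0, \ldots, x_d\})$ to uniform sampling of the standard probability simplex $\Delta^d = \{\alpha \in \mathbb{R}^{d+1} : \alpha_i \geq 0,\ \sum_{i=0}^d \alpha_i = 1\}$, and then to sample $\Delta^d$ via the order statistics of i.i.d.\ uniforms. First I would set up the affine bijection: let $\phi \colon \mathbb{R}^{d+1} \to \mathbb{R}^n$ be the linear map $\phi(\alpha) = \sum_{i=0}^d \alpha_i x_i$. Restricted to the hyperplane $\{\sum_i \alpha_i = 1\}$, $\phi$ is an affine map onto the affine hull of $\{x_0, \ldots, x_d\}$, and unwinding the definition in the statement, it is injective on that hyperplane exactly when $\sum_i \alpha_i x_i = 0$ and $\sum_i \alpha_i = 0$ force $\alpha = 0$, i.e.\ exactly when $x_0, \ldots, x_d$ are affinely independent — which is assumed. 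Hence $\phi$ restricts to an affine bijection $\Delta^d \to S$.

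The point of this reduction is that an affine bijection between $d$-dimensional sets scales the $d$-dimensional Hausdorff measure (equivalently, Lebesgue measure on the affine hull) by a fixed positive constant, namely the square root of the Gram determinant of $x_1 - x_0, \ldots, x_d - x_0$. Consequently it pushes the uniform distribution on $\Delta^d$ forward to the uniform distribution on $S$, so it suffices to draw $\alpha \sim \mathrm{Unif}(\Delta^d)$ and output $\phi(\alpha)$. To sample $\Delta^d$, I would use the spacings construction: draw $U_1, \ldots, U_d$ i.i.d.\ $\mathrm{Unif}(0,1)$, sort them to get $0 =: U_{(0)} \le U_{(1)} \le \cdots \le U_{(d)} \le U_{(d+1)} := 1$, and set $\alpha_i := U_{(i+1)} - U_{(i)}$ for $i = 0, \ldots, d$. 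That $\alpha \sim \mathrm{Unif}(\Delta^d)$ follows from a change of variables: the order statistics $(U_{(1)}, \ldots, U_{(d)})$ have joint density $d!$ on $\{0 \le u_1 \le \cdots \le u_d \le 1\}$, the linear map sending them to $(\alpha_0, \ldots, \alpha_{d-1})$ is lower bidiagonal with unit diagonal, hence unimodular, so $(\alpha_0, \ldots, \alpha_{d-1})$ is uniform on the standard $d$-simplex $\{s_i \ge 0,\ \sum_{i=0}^{d-1} s_i \le 1\}$, and adjoining the determined coordinate $\alpha_d = 1 - \sum_{i=0}^{d-1} \alpha_i$ yields exactly $\mathrm{Unif}(\Delta^d)$.

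For the running time, the cost is dominated by the sort: drawing the $d$ uniforms and forming the $d+1$ successive differences are $O(d)$, giving the claimed $O(d\log d)$ to produce the barycentric weights $\alpha$; reading off the corresponding point of $S$ is then a single convex combination of the given vertices. (Normalizing $d+1$ i.i.d.\ exponentials is an even faster $O(d)$ alternative, but I use sorting to match the stated bound and to avoid positing exact exponential sampling.)

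There is no deep obstacle here — the result is genuinely folklore — so the real work is measure-theoretic bookkeeping. The two points that need care are: (i) making precise that ``uniform on $S$'' means uniform with respect to $d$-dimensional Hausdorff measure on the (possibly proper) affine hull of the $x_i$, and that an affine injection $\mathbb{R}^d \hookrightarrow \mathbb{R}^n$ rescales that measure by a constant Jacobian so uniformity is preserved under $\phi$ and $\phi^{-1}$; and (ii) the unimodular change of variables from order statistics to spacings, together with the correct handling of the redundant final coordinate $\alpha_d$. Everything else is routine.
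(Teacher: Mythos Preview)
The paper does not supply its own proof of this lemma; it is explicitly called folklore and attributed to \citet{JY24}. Your argument is exactly the standard one: push the problem through the barycentric-coordinates affine isomorphism to the probability simplex, then sample the latter via spacings of $d$ sorted i.i.d.\ uniforms. Both the measure-theoretic reduction and the spacings construction are handled correctly, and the $O(d\log d)$ bound for producing the barycentric weights is right. Your parenthetical separating out the final convex combination $\phi(\alpha)$ is appropriate, since that step costs $O(nd)$ arithmetic in general; in the paper's application $n = \Theta(d)$, so this is absorbed into the overall $O(d^2)$ bound of \Cref{thm:main} and does not conflict with anything claimed.
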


Since sampling a simplex is easy, it remains to find a triangulation of $\calO_2(P^*-r)$  into simplices. For completeness, the following definition formalizes the notion of a triangulation.

\begin{definition}
    Let $\mathcal{P} \subset \mathbb{R}^{d}$ be a polytope. For $0 \leq k \leq d-1$, a \emph{proper $k$-dim face} of $\mathcal{P}$ is an intersection $I$ of the form $H \cap \partial \mathcal{P}$ of a $(d-1)$-dim hyperplane $H$ with the boundary of $\mathcal{P}$, such that the $I \subset \partial \mathcal{P}$ and $\dim(I) = k$. A \emph{subdivision} of $\mathcal{P}$ is a collection of $d$-dim polytopes $P_1,...,P_m$ such that $\mathcal{P} = \cup_{i=1}^{m}P_i$ and for each $1 \leq i < j \leq m$ the (possibly empty) set $P_i \cap P_j$ is a proper face of each of $P_i$ and $P_j$. A \emph{triangulation} of $\mathcal{P}$ is a subdivision containing only simplices. A simplex with vertices in a lattice $\Lambda$ is \emph{unimodular} if it has minimal volume. A triangulation is unimodular if all of its simplices are unimodular.
\end{definition}

We use a result from \citet{CFS17} that provides a mapping between the double order polytope and a related geometric object, the double chain polytope.

\begin{definition}[\cite{CFS17}]
\label{def:chain}
    Given poset $P$, a \emph{chain} is a sequence of (strictly) increasing elements in $P$. The \emph{chain polytope} $\mathcal{C}(P)$ is the set of functions $g:P\rightarrow \mathbb{R}^{\geq 0}$ such that $g(a_1) + ... + g(a_k) \leq 1$ for all chains $a_1 \preceq ... \preceq a_k$ of $P$. The \emph{double chain polytope} $\mathcal{C}_{2}(P)$ of a double poset $(P, \preceq_{+}, \preceq_{-})$ is $\mathcal{C}(P_{+}) \boxminus \mathcal{C}(P_{-})$
\end{definition}

\begin{lemma}[Theorem 4.3~\cite{CFS17}]
\label{lem:homeomorphism}
    For any compatible double poset $(P, \preceq_+, \preceq_-)$, there is an explicit homeomorphism between the double chain polytope $\calC_2(P)$ and the double order polytope $\calO_2(P)$.
\end{lemma}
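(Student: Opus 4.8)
The plan is to reduce the statement to Stanley's transfer map between a single poset's order polytope and its chain polytope, and then lift that map to the Cayley-sum construction. For a poset $Q$, recall Stanley's transfer map $\Phi_Q \colon \calO(Q) \to \calC(Q)$ given by $\Phi_Q(f)(p) = f(p) - \max\{ f(q) : q \lessdot p\}$, with the convention that the maximum over the empty set is $0$ (so $\Phi_Q(f)(p) = f(p)$ for $p$ minimal), whose inverse is $\Phi_Q^{-1}(g)(p) = \max\{ g(p_1) + \cdots + g(p_k) : p_1 \lessdot \cdots \lessdot p_k = p\}$, the maximum of $g$-sums over saturated chains topped by $p$. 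The facts I would invoke are: (i) $\Phi_Q$ is a piecewise-linear homeomorphism; (ii) $\Phi_Q$ is a single unimodular (triangular) linear map on each maximal cell $\Delta_\sigma$ of the canonical triangulation of $\calO(Q)$ --- indexed by the linear extension $\sigma$ of $Q$ that the total order of the coordinates $f(p)$ refines --- and it carries the collection of these cells bijectively onto the cells of the canonical triangulation of $\calC(Q)$; and (iii) it sends the vertex of $\calO(Q)$ indexed by an order filter $F$ to the vertex of $\calC(Q)$ indexed by the antichain $\min(F)$. All of these are in Stanley's ``Two poset polytopes'' or follow from a short computation (e.g., the chain constraint for $\Phi_Q(f)$ telescopes to $f(p_k) \le 1$).

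Next I would assemble a double transfer map $\Psi \colon \calO_2(P) \to \calC_2(P)$ out of $\Phi_{P_+}$ and $\Phi_{P_-}$. This is where compatibility enters: fix a linear extension $\preceq^{*}$ shared by $P_+$ and $P_-$ and use it as a common labeling $p_1, \ldots, p_n$ of the ground set, so that both transfer maps --- and in particular their cell-wise linear pieces --- are written in one consistent coordinate frame. Since $\calO_2(P) = \calO(P_+) \boxminus \calO(P_-) = CH(\{1\}\oplus\calO(P_+) \cup \{-1\}\oplus(-\calO(P_-)))$ and $\calC_2(P) = CH(\{1\}\oplus\calC(P_+) \cup \{-1\}\oplus(-\calC(P_-)))$ are Cayley sums along the $x_0$-axis, I would triangulate each one by the Cayley trick, lifting the canonical triangulations of the two end faces at $x_0 = 1$ and $x_0 = -1$ to a staircase (fine mixed) triangulation of the whole prismatoid; its maximal cells are ``mixed'' simplices spanned by a face of the $\calO(P_+)$-triangulation at height $1$ and a face of the $-\calO(P_-)$-triangulation at height $-1$. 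Because $\Phi_{P_+}$ and $\Phi_{P_-}$ match the end triangulations of $\calO$ with those of $\calC$ cell-by-cell (fact (ii)), the two staircase triangulations are indexed by the same combinatorial data --- interleavings of a linear extension of $P_+$ with one of $P_-$, i.e.\ the ``pairs of non-interfering chains in the Birkhoff lattice'' alluded to in \Cref{subsec:related_work}. I would then let $\Psi$ be the map that fixes $x_0$, is linear on each cell of the $\calO_2$-triangulation, and restricts on each end face to $\Phi_{P_+}$ (resp.\ to the reflected map $x \mapsto -\Phi_{P_-}(-x)$); linearity of the two $\Phi$'s on their faces makes this affine interpolation well-defined on each mixed simplex.

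The main obstacle is checking that $\Psi$ is single-valued and continuous across cell boundaries --- above all across the ``interface'' faces where a cell built from $\calO(P_+)$ meets one built from $-\calO(P_-)$, which lie in the common refinement of the two staircase subdivisions near $x_0 = 0$. This is exactly the point at which compatibility does the real work: the shared linear extension $\preceq^{*}$ forces the two families of cell-wise linear maps to agree on those common faces, so the locally defined pieces glue into one global PL map; without a shared linear extension the two end triangulations need not admit a common refinement and $\Psi$ would fail to be well-defined. Granting continuity, bijectivity is immediate, since each cell-wise piece is a linear isomorphism and the two triangulations share an index set, so $\Psi$ maps the $\calO_2$-triangulation onto the $\calC_2$-triangulation; the PL inverse is assembled in the same way from the $\Phi_{P_\pm}^{-1}$. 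This yields the explicit homeomorphism and, as a byproduct, the triangulation of $\calO_2(P^{*}-r)$ that the sampler uses.
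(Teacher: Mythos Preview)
The paper does not prove this lemma at all; it is quoted verbatim as Theorem~4.3 of \citet{CFS17} and used as a black box. So there is no in-paper argument to compare your proposal against.

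That said, your sketch is broadly the CFS17 construction: extend Stanley's transfer map $\Phi_Q$ from the single-poset setting to the Cayley sum, and use the resulting PL map to transport the canonical triangulation of $\calC_2(P)$ to one of $\calO_2(P)$. Two caveats. First, your account of where compatibility enters is not quite right: Stanley's map $\Phi_Q$ is intrinsic to $Q$ and does not depend on any chosen linear extension, so ``fixing a shared $\preceq^*$ to write both maps in one frame'' is not the mechanism. In CFS17 the double transfer map is defined directly on $\calO_2(P)$ (not by first triangulating and then interpolating cell-by-cell), and compatibility is what guarantees that this globally defined PL map is a bijection onto $\calC_2(P)$; the triangulations come out as a consequence rather than as an input. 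Second, your definition of $\Psi$ is circular as written: you propose to make it linear on each cell of a triangulation of $\calO_2(P)$, but producing that triangulation is exactly what the homeomorphism is used for in \Cref{lem:homeomorphism_2}. If you want to turn this into a self-contained proof rather than a citation, define the double transfer map pointwise (as CFS17 do) and then verify continuity and bijectivity directly; the cell structure then falls out from the linearity domains of the two Stanley maps.
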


Conveniently, the mapping transfers a triangulation of the double chain polytope to a triangulation of the double order polytope also due to~\citet{CFS17}. This triangulation will be indexed by objects called non-interfering pairs of chains.

\begin{definition}[\cite{CFS17}]
    Given poset $(P, \preceq)$, $I \subset P$ is a \emph{filter} if $x \in I$ and $x \preceq y$ implies $y \in I$, and the \emph{Birkhoff poset} $\calJ(P)$ is the poset given by the filters of $P$ ordered by inclusion. Given double poset $(P, \preceq_+, \preceq_-)$, for chains of filters $C_+ \subset \calJ(P_+)$ and $C_- \subset \calJ(P_-)$, we say the pair of chains $C = (C_+, C_-)$ is \emph{non-interfering} if $\min(J_+) \cap \min(J_-) = \emptyset$ for all $J_+ \in C_+, J_- \in C_-$, where $\min(\cdot)$ denotes the set of minimal elements of the filter. Moreover, we denote by $\Delta^{ni}(P)$ the collection of non-interfering pairs of chains.
\end{definition}

We therefore obtain the following triangulation of the double order polytope indexed by non-interfering pairs of chains. This follows from Corollary 4.1, Theorem 4.3, and Equation 13 of \citet{CFS17}.

\begin{lemma}[\citet{CFS17}]
\label{lem:homeomorphism_2}
    Let $(P, \preceq_+, \preceq_-)$ be a double poset. For $L \subset \calJ(P)$, define $F(L) = CH(\{\indic{J} \mid J \in L\}) \subset \mathbb{R}^{|P|}$. For non-interfering pair of chains $C = (C_+, C_-)$ of size $|C| = |C_+| + |C_-| = |P| + 1$, define map $\bar F(C) = F(C_+) \boxminus F(C_-)$. Then $\{\bar F(C) \mid C \in \Delta^{ni}(P)\}$ is a triangulation of $\calO_2(P)$ into $|P|$-dimensional simplices. Moreover, given $C$, $\bar F(C)$ can be computed in time $O(d^2)$.
\end{lemma}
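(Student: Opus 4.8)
The plan is to treat the statement in two pieces. The assertion that $\{\bar F(C) : C \in \Delta^{ni}(P)\}$ is a triangulation of $\calO_2(P)$ into simplices of the stated dimension is not new; it is obtained by chaining together Corollary~4.1, Theorem~4.3, and Equation~13 of~\citet{CFS17}. Corollary~4.1 provides a unimodular triangulation of the double chain polytope $\calC_2(P)$ whose maximal cells are indexed by the non-interfering pairs of chains $\Delta^{ni}(P)$; Theorem~4.3 (our \Cref{lem:homeomorphism}; this step uses compatibility of the double poset, which holds here since $\preceq_+ = \preceq_-$) supplies a homeomorphism $\calC_2(P) \to \calO_2(P)$ under which, as recorded just above, triangulations of the former are carried to triangulations of the latter; and Equation~13 identifies the image of the cell indexed by $C$ with $\bar F(C) = F(C_+) \boxminus F(C_-)$. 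The one thing I would verify along the way is that the small change in the definition of $\calO_2(P)$ flagged in the footnote (and discussed in \Cref{remark:2}) does not disturb this: since that change is an affine isomorphism sending each $\bar F(C)$ to $\bar F(C)$, it preserves both the property of being a triangulation and the dimensions of the cells. I expect this reconciliation of conventions to be the only delicate point; everything else is a matter of unwinding definitions.

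It therefore remains to establish the $O(d^2)$ bound on producing $\bar F(C)$ from $C$. Because an affine map commutes with taking convex hulls, unfolding the definitions of $F$, $\boxminus$, and $\boxplus$ gives
\[
\bar F(C) \;=\; F(C_+)\boxplus\bigl(-F(C_-)\bigr) \;=\; CH\Bigl(\{(1,\indic{J}) : J \in C_+\}\ \cup\ \{(-1,-\indic{J}) : J \in C_-\}\Bigr)\ \subset\ \mathbb{R}^{|P|+1}.
\]
The generating set on the right consists of $|C_+| + |C_-| = |C|$ points (the filters in a chain are pairwise distinct, and the two families lie in the disjoint hyperplanes $x_0 = 1$ and $x_0 = -1$, so no two coincide); by the triangulation claim $\bar F(C)$ is a simplex, hence these generators are exactly its vertices and are affinely independent in the sense of \Cref{lem:simplex_sample}. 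So producing $\bar F(C)$ amounts to writing out this list of $|C| = O(|P|)$ vectors, each lying in $\mathbb{R}^{|P|+1}$.

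Finally I would give the (routine) algorithm producing that list. For each of the two chains---say $C_+ : J_1 \subsetneq J_2 \subsetneq \cdots$---I would sweep upward maintaining a running indicator vector, at the step from $J_{i-1}$ to $J_i$ setting to $1$ the coordinates of the newly admitted elements $J_i \setminus J_{i-1}$ and then emitting the lifted point $(1, \indic{J_i})$; symmetrically for $C_-$, emitting the points $(-1, -\indic{J})$. Over a single chain each coordinate of $P$ is flipped at most once, so the running vectors are maintained in $O(|P|)$ total time, on top of the unavoidable $O(|P|)$ cost of writing out each emitted vertex. With $|C| = O(|P|)$ vertices emitted, each of length $|P|+1$, the total running time is $O(|P|^2) = O(d^2)$. (Even the crude variant---recompute $\indic{J}$ from scratch for each of the $O(d)$ filters in $O(d)$ time apiece---already gives $O(d^2)$, so the incremental maintenance is only a convenience.) As noted, the sole genuine obstacle is the careful alignment of~\citet{CFS17}'s triangulation statements with our modified definition of the double order polytope; the vertex identification and the runtime analysis both fall out directly from the definitions.
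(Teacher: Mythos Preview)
Your proposal is correct and matches the paper's approach: the paper itself offers no proof for this lemma beyond the sentence ``This follows from Corollary~4.1, Theorem~4.3, and Equation~13 of~\citet{CFS17},'' which is exactly the chain of citations you invoke for the triangulation claim. Your explicit $O(d^2)$ argument for computing the vertex list of $\bar F(C)$ is more than the paper provides (the paper simply asserts the runtime), and your parenthetical about compatibility is a useful observation, since the lemma as stated omits that hypothesis even though Theorem~4.3 of~\citet{CFS17} requires it.
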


As a result of this triangulation, it now suffices to uniformly sample a non-interfering pair of chains associated with $P^*-r$ of size $d+1$. This will be easier, because we can construct a bijection between these pairs and extended bipartitions of $P^*-r$ , and extended bipartitions are conceptually simpler than non-interfering pairs of chains.

\begin{definition}
    Given poset $(P, \preceq)$, an \emph{extended bipartition} is a quadruple ${((A, \preceq),(B, \preceq), \preceq_{A}, \preceq_{B})}$ where $(A, \preceq),(B, \preceq)$ are subposets of $P$ such that $A \cap B = \emptyset, A \cup B = P$, and $\preceq_{A}, \preceq_{B}$ are linear extensions of $(A, \preceq),(B, \preceq)$ respectively. 
\end{definition}

At a high level, given a non-interfering pair of chains, it can be shown that the set difference between the minimal elements of adjacent filters in each chain has size exactly 1, and that these successive differences form a linear extension of the subposet induced by those elements. Moreover, the ground sets of the two linear extensions form a bipartition of the original ground set. Conversely, given an extended bipartition, we can define a non-interfering pair of chains of filters by treating each suffix of each linear extension of the bipartition as the minimal elements of a filter.

\begin{lemma}
\label{lem:chains_bipartitions_bijection}
     There is a bijection between the set of non-interfering pairs of chains $C = (C_+, C_-)$ of the double poset $(P^*-r, \preceq, \preceq)$ of size $|C| = |C_+| + |C_-| = d+1$ and the set of extended bipartitions of $P^*-r$. Moreover, given an extended bipartition, its corresponding non-interfering pair of chains can be computed in time $O(d^2)$.
\end{lemma}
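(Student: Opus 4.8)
The plan is to exhibit the bijection explicitly in both directions, check each map is well defined, verify the two composites are the identity, and read the $O(d^2)$ bound off the forward construction. Write $Q = P^*-r$, $n = |Q|$, and for $S\subseteq Q$ let ${\uparrow}S$ denote the up-closure of $S$ in $Q$ (the smallest filter of $Q$ containing $S$); recall a finite filter equals the up-closure of its set of minimal elements. An extended bipartition of $Q$ produces $n+2$ filters — the suffixes, including the empty and full ones, of its two linear extensions — which matches the size constraint $|C_+|+|C_-| = d+1$, so $n = d-1$ and I work with $|C_+|+|C_-| = n+2$. \emph{Forward map:} given $((A,\preceq),(B,\preceq),\preceq_A,\preceq_B)$, list $A$ as $a_1\preceq_A\cdots\preceq_A a_s$ and $B$ as $b_1\preceq_B\cdots\preceq_B b_t$, set $J^i_+ = {\uparrow}\{a_{i+1},\dots,a_s\}$ for $0\le i\le s$, put $C_+ = \{J^0_+,\dots,J^s_+\}$, and build $C_-$ from the suffixes of $\preceq_B$ symmetrically. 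Three short checks: the $J^i_+$ are filters forming a strictly decreasing chain as the suffix shrinks (strict because no $a_k$ with $k>i$ lies above $a_i$, as $\preceq_A$ extends $\preceq$); $|C_+|+|C_-| = (s+1)+(t+1) = n+2 = d+1$; and for any $S$, the minimal elements of ${\uparrow}S$ lie in $S$, so every $\min(J^i_+)\subseteq A$ and every $\min(J^j_-)\subseteq B$, whence $A\cap B=\emptyset$ forces non-interference. This last point is where the forward direction is essentially free.

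\emph{Backward map:} write $C_+ = (J^0_+\subsetneq\cdots\subsetneq J^p_+)$ and $C_- = (J^0_-\subsetneq\cdots\subsetneq J^q_-)$, so $p+q = n$. I would first prove two facts about any chain of filters $J^0\subsetneq\cdots\subsetneq J^\ell$: (a) for fixed $e$, the set of indices $i$ with $e\in\min(J^i)$ is an interval, and its left endpoint is also the first index with $e\in J^i$ (if $e\in\min(J^i)\setminus\min(J^{i-1})$ then $e\notin J^{i-1}$, else some $w\prec e$ is already present and destroys minimality at step $i$; once such a $w$ enters the filter it stays); (b) $\min(J^i)\not\subseteq\min(J^{i-1})$ at every step, since otherwise $J^i = {\uparrow}\min(J^i)\subseteq J^{i-1}$. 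By (a)--(b), each step of $C_+$ introduces a fresh minimal element whose interval starts there, so $A := \bigcup_i\min(J^i_+)$ satisfies $|A|\ge p$, and likewise $|B|\ge q$ for $B := \bigcup_j\min(J^j_-)$; also $A\cap B=\emptyset$ by non-interference. Hence $p+q\le|A|+|B|\le|Q| = p+q$, forcing equality throughout: $|A|=p$, $|B|=q$, $A\sqcup B=Q$, $J^0_+ = J^0_- = \emptyset$, and exactly one new minimal element arises at each step; call these $f_1,\dots,f_p$ (in step order) for $C_+$ and $g_1,\dots,g_q$ for $C_-$. Finally $f_p\preceq_A\cdots\preceq_A f_1$ is a linear extension of $(A,\preceq)$: if $f_i\prec f_j$ with $i<j$, then $f_i\in J^j_+$ lies strictly below $f_j\in\min(J^j_+)$, a contradiction; similarly for $B$. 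This produces the extended bipartition.

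\emph{Inverses and running time:} for forward-then-backward, the chain of a bipartition has $J^i_+ = {\uparrow}\{a_{i+1},\dots,a_s\}$, whose unique new minimal element as $i$ decreases is $a_{i+1}$, so the recovered orders are exactly $\preceq_A,\preceq_B$. For backward-then-forward, (a) gives $\min(J^j_+)\subseteq\{f_1,\dots,f_j\}$ while $\{f_1,\dots,f_j\}\subseteq J^j_+$, so $J^j_+ = {\uparrow}\{f_1,\dots,f_j\}$, which is precisely the filter the forward map builds from the recovered linear extension. For the $O(d^2)$ bound, compute the $n+2$ filters incrementally via $J^{i-1}_+ = J^i_+\cup{\uparrow}\{a_i\}$, where ${\uparrow}\{a_i\}$ is one row of the input matrix $M$ restricted to $Q$: $O(d)$ work per filter and $O(d)$ filters.

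\textbf{Main obstacle:} the delicate step is the backward direction's tightness argument — one must use the \emph{exact} size $|C| = d+1$, through the double count $|A|\ge|C_+|-1$, $|B|\ge|C_-|-1$, $|A|+|B|\le|Q|$, together with the ``minimal membership is an interval'' lemma, to force exactly one new minimal element per step; without this, the extraction of $\preceq_A,\preceq_B$ is not even well defined. Well-definedness and non-interference of the forward map, by contrast, follow immediately once one observes that the minimal elements of ${\uparrow}S$ lie in $S$.
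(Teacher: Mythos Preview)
Your proposal is correct and follows essentially the same route as the paper: both constructions send an extended bipartition to the chain of up-closures of suffixes and send a non-interfering pair of chains to the sequence of ``new'' minimal elements, and both hinge on the same tightness count $|A|+|B|\le |Q|=p+q\le |A|+|B|$ (forced by non-interference together with $|C|=d+1$) to conclude that exactly one minimal element appears per step and $J^0_\pm=\emptyset$. The differences are cosmetic --- you swap which direction is called ``forward,'' you package the paper's disjointness-of-successive-differences argument as an ``interval'' lemma, and you verify both composites explicitly where the paper simply asserts the inverse; one small slip to fix: in your strictness check, ``no $a_k$ with $k>i$ lies \emph{above} $a_i$'' should read ``below'' (you need to rule out $a_k\preceq a_i$, not $a_i\preceq a_k$).
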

\begin{proof}
    Suppose we have a non-interfering pair of chains $C = (C_{+}, C_{-})$ where $|C| = d+1$. Write each chain of filters as $C_{\sigma} = J_{1}^{\sigma} \subsetneq ... \subsetneq J_{|C_{\sigma}|}^{\sigma}$ for $\sigma \in \{+, -\}$. Then $|\min(J_{i+1}^{\sigma}) - \min(J_{i}^{\sigma})| \geq 1$ for $1 \leq i \leq |C_{\sigma}|-1$. Any element $p \in (\min(J_{i+1}^\sigma) - \min(J_i^\sigma))$ lies in $J_{i+1}^\sigma$. If it also lies in $J_i^\sigma$, then $p \not \in \min(J_i^\sigma)$ means there exists $p' \in J_i^\sigma$ with $p' \preceq p$, but $J_i^\sigma \subsetneq J_{i+1}^\sigma$ then implies that $p \not \in \min(J_{i+1}^\sigma)$, a contradiction. Thus $(\min(J_{i+1}^\sigma) - \min(J_i^\sigma)) \subseteq (J_{i+1}^\sigma - J_i^\sigma)$, and similarly for any $j > i$, we have
    \begin{equation*}(\min(J_{j+1}^\sigma) - \min(J_j^\sigma)) \subseteq (J_{j+1}^\sigma - J_j^\sigma).
    \end{equation*}
    Then $J_{i+1}^\sigma \subseteq J_j^\sigma$ implies $(\min(J_{i+1}^\sigma - \min(J_i^\sigma)) \cap (\min(J_{j+1}^\sigma - \min(J_j^\sigma)) = \emptyset$.

    Let $N_{\sigma} = \cup_{i=1}^{|C_{\sigma}|-1}(\min(J_{i+1}^{\sigma}) - \min(J_{i}^{\sigma}))$. The above implies $|N_{\sigma}| \geq |C_{\sigma}|-1$. Moreover, $N_{+} \cap N_{-} = \emptyset$ since $N_{\sigma} \subset \left(\cup_{i=1}^{|C_{\sigma}|-1}\min(J_{i+1}^{\sigma})\right)$, and $\left(\cup_{i=1}^{|C_{+}|-1}\min(J_{i+1}^{+})\right) \cap \left(\cup_{i=1}^{|C_{-}|-1}\min(J_{i+1}^{-})\right) = \emptyset$ since $C$ is non-interfering. Thus
    \begin{equation*}
        |N_{+} \cup N_{-}| = |N_{+}| + |N_{-}| \geq |C_{+}| + |C_{-}| - 2 = |C| - 2 = d - 1 = |P^* - r|.
    \end{equation*}
    But $N_{+} \cup N_{-} \subseteq P^* - r$, so $|N_{+} \cup N_{-}| = |P^* - r|$. This means that $(N_{+}, N_{-})$ partition $P^* - r$ and $|\min(J_{i+1}^{\sigma}) - \min(J_{i}^{\sigma})| = 1$ for $1 \leq i \leq |C_{\sigma}|-1$. It follows that $\min(J_{1}^{\sigma}) = J_{1}^{\sigma} = \emptyset$. This leads us to the following extended bipartition. Let $a_{i} = \min(J_{|C_{+}|-i+1}^{+}) - \min(J_{|C_{+}|-i}^{+})$ for $1 \leq i \leq |C_{+}|-1$ and $b_{i} = \min(J_{|C_{-}|-i+1}^{-}) - \min(J_{|C_{-}|-i}^{-})$ for $1 \leq i \leq |C_{-}|-1$. Since filters are upwards closed, either $a_{i} \preceq a_{j}$ for some $a_{j} \in \{a_{i+1},...,a_{|C_{+}|-1}\}$ or $a_{i}$ is incomparable to each element of $\{a_{i+1},...,a_{|C_{+}|-1}\}$. In either case, $a_{1} \preceq_{A} ... \preceq_{A} a_{|C_{+}|-1}$ is a linear extension of $N_{+}$, and similarly $b_1 \preceq_{B}... \preceq_{B} b_{|C_{-}|-1}$ is a linear extension of $N_{-}$, so $((N_+, \preceq), (N_-, \preceq), \preceq_A, \preceq_B)$ is an extended bipartition of $P^*-r$.

    Conversely, given an extended bipartition $(N_{+}, N_{-}, a_{1} \preceq_{A}...\preceq_{A}a_{|N_{+}|}, b_{1} \preceq_{B}...\preceq_{B}b_{|N_{-}|})$, define sets $m_{1}^{+} = \emptyset$, $m_{1}^{-} = \emptyset$, $m_{i}^{+} = \min(\cup_{j=1}^{i-1}a_{|N_{+}| + 1 - j})$ for $2 \leq i \leq |N_{+}| + 1$ and $m_{i}^{-} = \min(\cup_{j=1}^{i-1}b_{|N_{-}| + 1 - j})$ for $2 \leq i \leq |N_{-}| + 1$. Define $J_{i}^{\sigma}$ to be the filter with minimal elements $m_{i}^{\sigma}$ for $2 \leq i \leq |N_{\sigma}| + 1$ and let $J_{1}^{\sigma} = \emptyset$. Let $C_{\sigma} = J_{1}^{\sigma} \subsetneq ... \subsetneq J_{|N_{\sigma}| + 1}^{\sigma}$. Then $(C_{+}, C_{-})$ is a non-interfering pair with $|C_{+}| + |C_{-}| = |N_{+}| + |N_{-}| + 2 = d + 1$, and this map from the set of extended bipartitions to the set of non-interfering pairs of chains is an inverse of the mapping of the previous paragraphs.
    
    It remains to verify the runtime of computing the non-interfering pair of chains we constructed above. Without loss of generality, we focus on $J_i^+$. Filter $J_1^+ = \emptyset$ and $J_2^+$ is the set of elements with a 1 in row $|N_+|$ of matrix $M$, computed in time $O(d)$. Now, suppose we have computed $J_{i}^{+}$ and want to compute $J_{i+1}$. By comparing rows $|N_+|+1-i$ and $|N_+|+2-i$ of $M$, we can compute the set $p_{i}$ of parents of $a_{|N_+| + 1 -i}$ that are not parents of $a_{|N_+| + 2 -i}$ (we include $a_{|N_+| + 1 -i} \in p_{i}$) in time $O(d)$. Note that $p_{i} = J_{i+1} - J_{i}$. Then $J_{i+1}^+ = J_i^+ \sqcup p_{i}$ where $\sqcup$ is disjoint union. Thus each filter of $J_1^+ \subsetneq \ldots \subsetneq J_{|N_+|+1}^+$ can be computed in time $O(d)$, taking time $O(d^2)$ overall.
\end{proof}

The problem is now reduced to uniformly sampling an extended bipartition. Constructing such a sampler is the last technical step in our argument.

\begin{lemma}
\label{lem:sample_bipartition}
    Given poset $(P, \preceq)$ of size $|P| = d$, there is an algorithm to uniformly sample an extended bipartition of $P$ in time $O(d^2)$.
\end{lemma}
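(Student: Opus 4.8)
The plan is to reduce uniform sampling of an extended bipartition to uniform sampling of a sequence of local choices, and then to navigate that sequence with a dynamic program.

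First I would compute a linear extension $p_1 \preceq^* \cdots \preceq^* p_d$ of $(P,\preceq)$ by repeatedly extracting a minimal element of the remaining poset from the matrix $M$; this costs $O(d^2)$. I would then build an extended bipartition by deciding, for $i = 1, \ldots, d$ in order, which part $p_i$ joins and at which position $p_i$ is inserted into the partial linear extension of that part constructed so far. The structural point that makes this clean is that, because we process in linear-extension order, all children of $p_i$ have already been placed, and the valid insertion positions for $p_i$ inside a given part form a single contiguous block: every position after the last already-placed child of $p_i$ in that part (all positions, if there is no such child). Indeed, anything currently placed after such a position must be incomparable to $p_i$, since an element below $p_i$ is a child and so lies at or before that point, while an element above $p_i$ has a larger index and is not yet placed. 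Hence sequences of valid (part, position) choices are in bijection with the extended bipartitions of $P$, and both directions of the bijection are computable in $O(d^2)$.

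With this bijection, uniformly sampling an extended bipartition is the same as uniformly sampling a valid choice sequence, which I would do top-down: having committed to $p_1, \ldots, p_{i-1}$, choose $p_i$'s (part, position) with probability proportional to the number of ways the current partial sequence can be completed. The main obstacle — and the technical heart of the lemma — is to compute these completion counts efficiently. A priori this looks no easier than counting linear extensions, which is \#P-hard: the choice sequences that put every element into one fixed part are exactly the linear extensions of $P$, so the completion count of the empty sequence is a sum whose terms include $e(P)$. The crux is therefore to show that, because we process in a fixed linear-extension order, the number of completions of a partial sequence depends on that sequence only through a sufficiently coarse summary of the partial construction — not on the full pair of partial linear extensions — so that all completion counts can be tabulated by a dynamic program with $O(d^2)$ entries in $O(d^2)$ total time. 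Granting this, the top-down sampler makes $d$ weighted draws, each in $O(d)$ time, for $O(d^2)$ overall; uniformity is immediate because each draw is proportional to the number of completions and, by the bijection, each choice sequence corresponds to exactly one extended bipartition. I expect essentially all the difficulty to be concentrated in identifying the right processing order together with a summary coarse enough to keep the dynamic program polynomial despite the \#P-hardness of the unrestricted counting problem; the bijection and the runtime bookkeeping should be routine by comparison.
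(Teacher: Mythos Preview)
Your proposal has a genuine gap at precisely the point you yourself flag as the crux. You correctly set up the bijection between extended bipartitions and valid choice sequences (processing elements in linear-extension order and recording, for each, a part together with an insertion slot in the contiguous block after its last already-placed child), and you correctly observe that top-down sampling with probabilities proportional to completion counts would be uniform. But you do not construct the dynamic program on which everything rests: you neither identify the ``coarse summary'' of the partial state that is supposed to determine the completion count, nor prove that such a summary exists with only $O(d^2)$ distinct values. You write ``Granting this'' and say you ``expect essentially all the difficulty to be concentrated'' here; that difficulty is never discharged. Your \#P-hardness remark is a warning sign but not an obstruction either way: the total count $E(P)=\sum_{A\subseteq P} e(A)\,e(P\setminus A)$ does contain $e(P)$ as one summand, yet computing the sum need not reveal that summand, so hardness of $e(P)$ does not by itself block your DP --- it just means real work is required, and the proposal omits it.

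For comparison, the paper's proof follows the same element-by-element skeleton but without any counting. It peels off a maximal element $v$, recursively obtains an extended bipartition of $P-v$, and then inserts $v$ \emph{uniformly} among the valid placements in the two parts --- exactly the contiguous blocks you describe. Uniformity is argued via the very bijection you set up: removing $v$ from an extended bipartition of $P$ recovers an extended bipartition of $P-v$ together with a valid placement. There is no dynamic program and no completion count to compute. In effect the paper's sampler is your choice-sequence sampler with \emph{unweighted} local choices, so the step you identify as the technical heart --- and leave open --- simply does not appear in the paper's argument.
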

\begin{proof}
    We start by reformulating the matrix $M$ associated with $P$ (\Cref{def:poset}) as a more suitable data structure $G_P$: iterate through the matrix $M$ and, for each element $v$, construct a list $p_v$ of its parent elements and a list $c_v$ of its child elements. This takes a single $O(d^2)$ pass through $M$. Note that $G_P$ enables us to compute a maximal element in time $O(d)$ by starting at any node and following parent pointers. 
    
    Our algorithm, $\calA$, receives a $G_P$ representation of a poset $P$ and uses recursion as follows. At each step, compute maximal element $v \in P$ in $O(d)$ and then construct $G_{P-v}$ from $G_{P}$ by deleting $v$ and the associated parent pointers from its children in time $O(d)$. Then compute $\calA(G_{P-v})$ to obtain extended bipartition $((A, \preceq), (B, \preceq), \preceq_A, \preceq_B)$ of $P-v$. We will use this extended bipartition of $P-v$ to construct one for $P$. Note that $\preceq_A$ and $\preceq_B$ can be represented by increasing ordered lists of elements of $A$ and $B$ respectively.
    
    Let $a_j$ be the $\preceq_{A}$-largest element that is  $\preceq$-smaller than $v$ in $a_1 \preceq_A \cdots \preceq_A a_k$. To find $a_j$, iterate from right to left (i.e., $\preceq_{A}$-largest to $\preceq_{A}$-smallest) over the $\preceq_{A}$ list and check whether
    the current node $w$ has the property that $M_{w,v} = 1$, stopping when the first child of $v$ is found. Define $b_{j'}$ similarly. Then modifying $\preceq_A$ by inserting $v$ anywhere to the right of $a_j$ in $a_1 \preceq \cdots \preceq a_k$ produces a linear extension $\preceq_{A'}$ and the extended bipartition $((A \cup \{v\}, \preceq), (B, \preceq), \preceq_{A'}, \preceq_B)$ of $P$. We call the set of all such insertion points the valid placements for $A$, and define the valid placements for $B$ similarly. Let $L$ be the union of the valid placements for $A$ and $B$. By the above, computing $L$ and uniformly sampling a valid placement takes time $O(d)$. Finally, inserting $v$ into a linear extension takes time $O(d)$ assuming the linear extensions are implemented using linked lists. Overall, we do $O(d)$ work at each recursive step.
    
    In the base case that $|P| = 1$, we flip a coin to either return $((P, \emptyset), \preceq_P, \preceq_{\emptyset})$ or $((\emptyset, P), \preceq_{\emptyset}, \preceq_P)$ where $\preceq_P$ and $\preceq_{\emptyset}$ are trivial linear extensions. Since we do $O(d)$ work at each recursive step and  there are $|P| = d$ steps, the total run time is $O(d^2)$.
    
    It remains to verify that the sampling is uniform. For any extended bipartition of $(P, \preceq)$ given by ${((A, \preceq), (B, \preceq), \preceq_A, \preceq_B)}$, removing maximal $v$ yields a smaller extended bipartition of the sub-poset $(P - v, \preceq)$ such that if $a_j$ and $b_{j'}$ are defined with respect to the smaller extended bipartition, then $v$ is either to the right of $a_j$ in $\preceq_A$ of $A$ or to the right of $b_{j'}$ in $\preceq_{B}$ of $B$.
\end{proof}

Having sampled an extended bipartition, \Cref{lem:chains_bipartitions_bijection} already verified that we can convert it to a non-interfering pair of chains efficiently, and \Cref{lem:homeomorphism_2} shows how to map that to a simplex in the double order polytope in time $O(d^2)$. Sampling the simplex takes time $O(d\log(d))$ (\Cref{lem:simplex_sample}), so putting these steps together  yields the overall sampler. Pseudocode appears in \Cref{alg:poset_ball_sampler}.

\begin{theorem}
\label{thm:main}
    The poset ball for $(P^*, \preceq)$ can be sampled in time $O(d^2)$.
\end{theorem}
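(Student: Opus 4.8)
The plan is to compose the constructions above and track the running time of each stage. By \Cref{lem:poset_ball_double_order_polytope}, the poset ball is the double order polytope $\calO_2(P^*-r)$ on the double poset $(P^*-r,\preceq,\preceq)$; since its two component orders coincide, this double poset is (trivially) compatible, so \Cref{lem:homeomorphism_2} supplies an explicit triangulation $\{\bar F(C)\mid C\in\Delta^{ni}(P^*-r)\}$ of $\calO_2(P^*-r)$ into equal-volume simplices, indexed by non-interfering pairs of chains. A uniform sample from $\calO_2(P^*-r)$ is therefore obtained by (i) drawing one such indexing pair $C$ uniformly at random, (ii) computing the simplex $\bar F(C)$, and (iii) returning a uniform point of $\bar F(C)$; correctness of this reduction is discussed below.

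I would then realize steps (i)--(iii) within the claimed budget, using $|P^*-r|=d-1=O(d)$. For (i), \Cref{lem:chains_bipartitions_bijection} exhibits a bijection between the size-$(d+1)$ non-interfering pairs of chains of $(P^*-r,\preceq,\preceq)$ (exactly the cells' indices) and the extended bipartitions of $P^*-r$, so a uniform indexing pair is produced by uniformly sampling an extended bipartition --- time $O(d^2)$ by \Cref{lem:sample_bipartition} --- and then applying the $O(d^2)$-time conversion of \Cref{lem:chains_bipartitions_bijection}. For (ii), \Cref{lem:homeomorphism_2} computes $\bar F(C)$ from $C$ in time $O(d^2)$. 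For (iii), $\bar F(C)$ is a simplex on $O(d)$ vertices, so \Cref{lem:simplex_sample} returns a uniform point in time $O(d\log d)$. Summing the three stages gives $O(d^2)$, and \Cref{alg:poset_ball_sampler} is exactly this pipeline.

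The only non-routine point is the correctness of the ``pick a cell, then sample inside it'' reduction. A uniform point obtained this way is uniform on $\calO_2(P^*-r)$ because the cells of a triangulation are full-dimensional simplices meeting pairwise only in proper faces --- a set of measure zero --- so the uniform law on the polytope is the volume-weighted mixture of the uniform laws on the cells, and the equal-volume property of the triangulation is precisely what lets a uniform choice of cell replace that volume weighting. I expect checking that the cells are indeed equal-volume to be the crux: it follows from the unimodularity of the triangulation of \Cref{lem:homeomorphism_2} (via the results of \citet{CFS17} that it cites). The rest --- that the bijection of \Cref{lem:chains_bipartitions_bijection} hits exactly the triangulation's index set, and that each conversion obeys the $O(d^2)$ bound --- was already established in the proofs of those lemmas, so nothing further is needed.
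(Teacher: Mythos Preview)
Your proposal is correct and follows essentially the same route as the paper: chain \Cref{lem:poset_ball_double_order_polytope}, \Cref{lem:homeomorphism_2}, \Cref{lem:chains_bipartitions_bijection}, \Cref{lem:sample_bipartition}, and \Cref{lem:simplex_sample}, summing the $O(d^2)$ costs. Your explicit justification of the equal-volume property (via unimodularity from \citet{CFS17}) is slightly more careful than the paper, which asserts it in the surrounding prose rather than in the formal statement of \Cref{lem:homeomorphism_2}.
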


\begin{algorithm}
    \begin{algorithmic}[1]
    \STATE {\bfseries Input:} Poset $P^*$ satisfying \Cref{assm: root}
    \STATE Uniformly sample an extended bipartition $(N_{+},N_{-}, \preceq_{A}, \preceq_{B})$ (\Cref{lem:sample_bipartition})
    \STATE Convert $(N_{+},N_{-}, \preceq_{A}, \preceq_{B})$ into its non-interfering chain $C = (C_{+}, C_{-})$ (\Cref{lem:chains_bipartitions_bijection})
    \STATE Compute the vertices of the simplex $\bar{F}(C) = F(C_+) \boxminus F(C_{-})$ (\Cref{lem:homeomorphism_2})
    \STATE Return a uniform sample from $\bar{F}(C)$ (\Cref{lem:simplex_sample})
    \caption{Poset Ball Sampler}
    \label{alg:poset_ball_sampler}
    \end{algorithmic}
\end{algorithm}

\subsection{Rejection Sampling the Poset Ball Is Inefficient}
We complement our efficient sampler with a negative result for rejection sampling the poset ball using any $\ell_p$ ball. First, the best possible candidate for rejection sampling is the $\ell_\infty$ ball.
\begin{lemma}
\label{lem:linf_min}
    The unit $\ell_\infty$ ball $B_{\infty}^{d}$ is the minimum volume $\ell_p$ ball containing the poset ball.
\end{lemma}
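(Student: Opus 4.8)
The plan is to reduce the claim to two elementary set inclusions plus monotonicity of volume under inclusion, avoiding any explicit Gamma‑function formula for the volume of an $\ell_p$ ball. Throughout, by an ``$\ell_p$ ball'' I mean an origin‑centered ball $rB_p^d = \{x\in\mathbb{R}^d : \|x\|_p\le r\}$, and I will compare these only in the regime $p\in[1,\infty]$ where they are convex. Write $W$ for the poset ball. First I would record two facts about $W$ coming from its vertex description (from the proof of \Cref{lem:poset_ball_double_order_polytope}): every vertex of $W$ lies in $\{-1,0,1\}^d$, and the all‑ones vector $\vecone$ is a vertex of $W$ (it is $\pm$ the indicator of the full filter $P^*-r$, i.e.\ the order‑preserving constant map $1$). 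Consequently (i) $W\subseteq B_\infty^d=[-1,1]^d$, since the cube is convex and contains every vertex of $W$; in particular the unit $\ell_\infty$ ball is one of the $\ell_p$ balls containing $W$, with $\mathrm{vol}(B_\infty^d)=2^d$; and (ii) $\vecone\in W$.

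Next I would handle a general finite $p$. Fix $p\in[1,\infty)$ and let $rB_p^d$ be any $\ell_p$ ball with $W\subseteq rB_p^d$. By (ii), $\vecone\in rB_p^d$, so $r\ge\|\vecone\|_p=d^{1/p}$, and hence $\mathrm{vol}(rB_p^d)\ge\mathrm{vol}(d^{1/p}B_p^d)$ by monotonicity of $\mathrm{vol}(rB_p^d)=r^d\,\mathrm{vol}(B_p^d)$ in $r$. The key observation is then the inclusion $B_\infty^d\subseteq d^{1/p}B_p^d$: if $\|x\|_\infty\le1$ then $\|x\|_p^p=\sum_{i=1}^d|x_i|^p\le d$, i.e.\ $\|x\|_p\le d^{1/p}$. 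Monotonicity of volume under inclusion now gives $\mathrm{vol}(B_\infty^d)\le\mathrm{vol}(d^{1/p}B_p^d)\le\mathrm{vol}(rB_p^d)$. The case $p=\infty$ is immediate: any $\ell_\infty$ ball containing $W$ contains $\vecone$, hence has radius at least $\|\vecone\|_\infty=1$ and volume at least $2^d$.

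Combining the two cases, every $\ell_p$ ball ($p\in[1,\infty]$) containing $W$ has volume at least $2^d=\mathrm{vol}(B_\infty^d)$, and by (i) the unit $\ell_\infty$ ball attains this, so it is a minimum‑volume such ball. (One may additionally note that for finite $p$ and $d\ge2$ the inclusion $B_\infty^d\subsetneq d^{1/p}B_p^d$ is strict — e.g.\ $(d^{1/p},0,\dots,0)$ witnesses this — so the minimizer is unique.) I do not expect a genuine obstacle here: the only point requiring a moment's care is comparing against the \emph{correct} object, namely the smallest‑radius $\ell_p$ ball enclosing $W$, which is exactly where fact (ii) (that $W$ reaches out to $\vecone$) is used; the ``cube inside the scaled $\ell_p$ ball'' inclusion is the one idea of the argument and is routine to verify.
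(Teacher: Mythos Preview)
Your proof is correct and follows essentially the same approach as the paper's: both arguments hinge on the fact that $\vecone$ lies in the poset ball, so any origin-centered $\ell_p$ ball enclosing it must have radius at least $d^{1/p}$ and hence contain $B_\infty^d$. Your write-up is somewhat more explicit than the paper's (you spell out both $W\subseteq B_\infty^d$ and the inclusion $B_\infty^d\subseteq d^{1/p}B_p^d$, and you note uniqueness), but the underlying idea is the same.
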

\begin{proof}
    Any $\ell_p$ ball that contains $(1,..,1)$ must also contain all points in $\{0,1\}^{d} - \{(1,...,1)\}$ since each of these other points has strictly smaller $\ell_p$ norm. In particular, any $\ell_p$ ball that contains $(1,...,1)$ must contain $B_{\infty}^{d}$. Since the poset ball contains $(1, \ldots, 1)$, the claim follows.
\end{proof}

Our overall result thus needs only to analyze rejection sampling from $B_\infty^d$.

\begin{theorem}
\label{thm: rejection sampling is inefficient}
    Rejection sampling the poset ball using any $\ell_p$ ball is inefficient.
\end{theorem}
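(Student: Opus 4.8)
The plan is to read ``inefficient'' as ``the expected number of rejection-sampling iterations, hence the expected running time up to polynomial factors, is $2^{\Omega(d)}$,'' and to obtain this from a volume comparison. First I would recall the standard fact that rejection sampling the poset ball $K$ from an envelope $\ell_p$ ball $B$ requires $B \supseteq K$ (otherwise points of $K \setminus B$ are never output), draws uniform points from $B$, and accepts the first one lying in $K$; the number of iterations is geometric with success probability $\mathrm{Vol}(K)/\mathrm{Vol}(B)$, so the expected number of iterations is $\mathrm{Vol}(B)/\mathrm{Vol}(K)$. By \Cref{lem:linf_min}, any $\ell_p$ ball containing $K$ contains $B_\infty^d$ and therefore has volume at least $\mathrm{Vol}(B_\infty^d) = 2^d$. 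Hence it suffices to show that $\mathrm{Vol}(K)$ is bounded by an absolute constant; I will show $\mathrm{Vol}(K) \le 2$, which forces an expected iteration count of at least $2^{d-1}$, exponentially worse than the $O(d^2)$ sampler of \Cref{thm:main}.

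Second, I would contain $K$ in a simple explicit superset. Write $r$ for the root and use the root coordinate as the first coordinate. Any partially ordered binary vector $x$ satisfies $0 \le x_p \le x_r \le 1$ for all $p \neq r$ (the binary and order constraints $p \preceq r$), so if $x_r = 0$ then $x = 0$, and if $x_r = 1$ then $x$ lies in the $(d-1)$-cube $A := \{1\} \times [0,1]^{d-1}$ in the hyperplane $x_r = 1$. Thus the set $S$ of partially ordered binary vectors is contained in $A \cup \{0\}$, and since $0 = \tfrac12 a + \tfrac12(-a)$ for any $a \in A$, the poset ball $K = CH(S \cup -S)$ is contained in $CH(A \cup -A)$. (One could instead argue from \Cref{lem:poset_ball_double_order_polytope} together with $\mathcal{O}(P^* - r) \subseteq [0,1]^{d-1}$.)

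Third, I would compute $\mathrm{Vol}(CH(A \cup -A))$ by slicing perpendicular to the root axis. Grouping a convex combination of points of $A$ and of $-A$ into a single ``positive'' and a single ``negative'' representative, every point of $CH(A \cup -A)$ has the form $\lambda a - \mu a'$ with $a, a' \in A$, $\lambda + \mu = 1$, $\lambda, \mu \ge 0$; the slice at root coordinate $t = \lambda - \mu \in [-1,1]$ therefore has $\lambda = \tfrac{1+t}{2}$, $\mu = \tfrac{1-t}{2}$, and each remaining coordinate (being $\lambda a_i - \mu a'_i$ with $a_i, a'_i \in [0,1]$, independently across $i$) ranges exactly over $[-\tfrac{1-t}{2}, \tfrac{1+t}{2}]$. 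So every slice is a $(d-1)$-cube of side length $\lambda + \mu = 1$, giving $\mathrm{Vol}(CH(A \cup -A)) = \int_{-1}^{1} 1 \, dt = 2$. (Equivalently: exhibit an affine map of unit Jacobian carrying $[-1,1] \times [0,1]^{d-1}$ onto $CH(A \cup -A)$.) Combining the three steps yields $\mathrm{Vol}(K) \le 2$ and the desired exponential lower bound.

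I expect the only real obstacle to be the bookkeeping in the slice computation: verifying that the ``group into one positive and one negative representative'' step is valid for arbitrary convex combinations, and that the per-coordinate ranges are exactly $[-\tfrac{1-t}{2}, \tfrac{1+t}{2}]$ (both the containment and its converse). The remaining ingredients --- the volume-ratio description of rejection sampling and the containment $K \subseteq CH(A \cup -A)$ --- are routine given \Cref{lem:linf_min} and the structure of partially ordered binary vectors.
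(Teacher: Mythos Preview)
Your proposal is correct and follows essentially the same route as the paper: reduce to $B_\infty^d$ via \Cref{lem:linf_min}, then bound every poset ball's volume by $2$. The paper phrases the bound by taking the star poset (only relations $p_i \preceq r$) as the extremal case and calling its ball a ``cylinder'' of volume $2$; your set $CH(A \cup -A)$ is exactly that star poset ball, and your slicing computation is a more careful version of the same volume calculation (indeed, the object is an oblique prism rather than a right cylinder, so your slice argument is arguably cleaner).
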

\begin{proof}
    By \Cref{lem:linf_min}, it suffices to consider $B_\infty^d$. Consider the poset $(P, \preceq)$ on $\{r, p_1, \ldots, p_{d-1}\}$ with set of relations $\{p_i \preceq r\}
    _{i=1}^{d-1}$. Then for any other poset $P'$, its poset ball has smaller volume as $R(P) \subset R(P')$ means that $P'$ can be formed from $P$ by intersecting $P$ with half-spaces defined by the relations corresponding to the relations in $R(P') - R(P)$. Then $P$'s poset ball is a cylinder with bases $[0,1]^{d-1}$ and height 2 (in the direction of the root axis) so has volume $1^{d-1}(2) = 2$. However, $|B_\infty^d| = 2^d$. It follows that rejection sampling requires $\Omega(2^d)$ samples from $B_\infty^d$ in expectation.
\end{proof}

\section{Experiments}
\label{sec:experiments}
This section evaluates our $K$-norm mechanism on a variety of poset structures (for experiment code, see Github~\cite{G24}). First, we derive a general result about the squared expected $\ell_2$ norm of $\ell_p$ balls (\Cref{subsec:baseline}). Based on this result, the strongest comparison for our algorithm is the $\ell_\infty$ mechanism, and we take it as the baseline for our experiments on synthetic random posets (\Cref{subsec:random_posets}) and the real National Health Interview Survey~\citep{N24} (\Cref{subsec:nhis}).

Note that our mechanism augments a given $d$-element poset with a root to ensure \Cref{assm: root} as necessary. Since the baseline $\ell_\infty$ mechanism is applied to the $d$-dimensional problem, we ignore the root dimension of the $(d+1)$-dimensional noise vector produced by our mechanism when comparing squared $\ell_2$ norms. As discussed in \Cref{subsec:prelims_poset}, by post-processing, this does not affect privacy.

\subsection{Choice of Baseline}
\label{subsec:baseline}
This section justifies our choice of the $\ell_\infty$ mechanism --- i.e., the $K$-norm mechanism instantiated with the $\ell_\infty$ norm --- as the baseline in our experiments. This choice relies on the following result, proved in \Cref{sec:appendix_experiment}.

\begin{restatable}{lemma}{expectedSquaredNorm}
\label{lem:expected_squared_norm}
    Let $\mathbb{E}_{2}^{2}(X)$ denote the expected squared $\ell_2$ norm of a uniform sample from $X$, and let $B_p^d$ denote the $d$-dimensional unit $\ell_p$ ball. Then $\mathbb{E}_{2}^{2}(B_{p}^{d}) = \frac{d}{3}\left(\frac{3d}{d+2}\right)\left(\frac{\Gamma(\frac{d}{p}) \Gamma(\frac{3}{p})}{\Gamma(\frac{1}{p}) \Gamma(\frac{d+2}{p})}\right)$, where $\Gamma$ is the gamma function, and $\mathbb{E}_{2}^{2}(B_{\infty}^{d}) = d/3$. Let $r_{p,d} = d^{1/p}$ be the minimum radius for which $r_{p,d}B_{p}^{d}$ contains $\mathcal{Y}(P^*)$. By the above, $\mathbb{E}_{2}^{2}(r_{p,d}B_{p}^{d}) = r_{p,d}^{2}\mathbb{E}_{2}^{2}(B_{p}^{d})$.
\end{restatable}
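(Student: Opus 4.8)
The plan is to compute $\mathbb{E}_2^2(B_p^d)$ via the polar decomposition of the uniform measure on $B_p^d$, reduce its angular part to a single Beta moment, and then obtain the three remaining assertions as short corollaries. First I would write a uniform sample $X$ from $B_p^d$ as $X = R\,\Theta$, where $R = \|X\|_p \in [0,1]$ and $\Theta = X/\|X\|_p \in \partial B_p^d$ are independent, $R$ has density $d\,r^{d-1}$ on $[0,1]$ (since $\mathrm{vol}(rB_p^d) = r^d\,\mathrm{vol}(B_p^d)$), and $\Theta$ follows the cone measure on $\partial B_p^d$. This splits the quantity as $\mathbb{E}_2^2(B_p^d) = \mathbb{E}[R^2]\cdot\mathbb{E}[\|\Theta\|_2^2]$, where $\mathbb{E}[R^2] = \int_0^1 d\,r^{d+1}\,dr = \tfrac{d}{d+2}$ is immediate. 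For the angular factor I would invoke the classical Schechtman--Zinn representation of the cone measure: if $g_1,\dots,g_d$ are i.i.d.\ with density $\propto e^{-|t|^p}$, then $\Theta \stackrel{d}{=} g/\|g\|_p$, so by coordinate symmetry and $\|g\|_p^2 = (\sum_j|g_j|^p)^{2/p}$ we have $\mathbb{E}[\|\Theta\|_2^2] = d\,\mathbb{E}[g_1^2/\|g\|_p^2] = d\,\mathbb{E}\bigl[(|g_1|^p/\sum_j|g_j|^p)^{2/p}\bigr]$.

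Setting $s_j = |g_j|^p$, the $s_j$ are i.i.d.\ $\gammad{1/p}{1}$ variables, so $B := s_1/\sum_j s_j \sim \betad{1/p}{(d-1)/p}$ and the angular factor equals $d\,\mathbb{E}[B^{2/p}]$. Applying the Beta-moment identity $\mathbb{E}[B^t] = \Gamma(\alpha+t)\Gamma(\alpha+\beta)/(\Gamma(\alpha)\Gamma(\alpha+\beta+t))$ with $\alpha = 1/p$, $\beta = (d-1)/p$, and $t = 2/p$ gives $\mathbb{E}[B^{2/p}] = \Gamma(3/p)\Gamma(d/p)/(\Gamma(1/p)\Gamma((d+2)/p))$, hence $\mathbb{E}_2^2(B_p^d) = \tfrac{d}{d+2}\cdot d\cdot\frac{\Gamma(3/p)\Gamma(d/p)}{\Gamma(1/p)\Gamma((d+2)/p)} = \frac{d^2}{d+2}\cdot\frac{\Gamma(d/p)\Gamma(3/p)}{\Gamma(1/p)\Gamma((d+2)/p)}$, which is exactly the claimed formula rewritten as $\tfrac{d}{3}\bigl(\tfrac{3d}{d+2}\bigr)\bigl(\frac{\Gamma(d/p)\Gamma(3/p)}{\Gamma(1/p)\Gamma((d+2)/p)}\bigr)$. (An alternative that avoids the probabilistic detour is to evaluate $\int_{B_p^d} x_1^2\,dx$ and $\mathrm{vol}(B_p^d)$ directly from the generalized Dirichlet integral $\int_{B_p^d}\prod_i|x_i|^{s_i}\,dx = \tfrac{2^d}{p^d}\cdot\frac{\prod_i\Gamma((s_i+1)/p)}{\Gamma(1+\sum_i(s_i+1)/p)}$ at $s_1 = 2$, $s_i = 0$ otherwise; this yields the same ratio after simplifying with $\Gamma(1+x) = x\Gamma(x)$.)

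For $p = \infty$, the cube $B_\infty^d = [-1,1]^d$ has independent coordinates uniform on $[-1,1]$, so $\mathbb{E}_2^2(B_\infty^d) = d\,\mathbb{E}[X_1^2] = d/3$; this also agrees with the $p\to\infty$ limit of the general formula, using $\Gamma(\epsilon)\sim 1/\epsilon$ as $\epsilon\to 0^+$. For the minimum radius, recall that the vertices of the poset ball $\calY(P^*)$ are partially ordered binary vectors and their negations, hence lie in $\{-1,0,1\}^d$, so $\calY(P^*)\subseteq B_\infty^d\subseteq d^{1/p}B_p^d$ because $\|x\|_p\le d^{1/p}\|x\|_\infty$; conversely $\vecone\in\calY(P^*)$ (as used in the proof of \Cref{lem:linf_min}) with $\|\vecone\|_p = d^{1/p}$, so no smaller dilate of $B_p^d$ contains $\calY(P^*)$, giving $r_{p,d} = d^{1/p}$. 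Finally, a uniform sample from $r_{p,d}B_p^d$ is exactly $r_{p,d}$ times a uniform sample from $B_p^d$, so $\mathbb{E}_2^2(r_{p,d}B_p^d) = r_{p,d}^2\,\mathbb{E}_2^2(B_p^d)$.

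I expect the delicate point to be the angular computation: one must invoke the cone-measure representation correctly (the uniform law on $B_p^d$ is the cone measure only after the independent radial rescaling) and track the exponent carefully, so that the power $2/p$ lands on the whole ratio $s_1/\sum_j s_j$ rather than inside the sum, and then match the resulting $\Gamma$-ratio to the claimed normalization. The radial integral, the $\ell_\infty$ specialization, the two containments, and the scaling identity are all routine.
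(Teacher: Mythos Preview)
Your proof is correct but follows a different route from the paper. The paper computes $\mathbb{E}_2^2(B_p^d)$ by direct integration: it writes $\int_{B_p^d}\sum_i x_i^2\,dx = d\int_{B_p^d}x_1^2\,dx$ by symmetry, integrates out $x_2,\ldots,x_d$ via Fubini to obtain a factor $|B_p^{d-1}|(1-|x_1|^p)^{(d-1)/p}$, substitutes $y=x_1^p$ to reduce to a Beta integral $\int_0^1 y^{3/p-1}(1-y)^{(d-1)/p}\,dy$, and then plugs in the explicit volume formula $|B_p^d|=(2\Gamma(1/p+1))^d/\Gamma(d/p+1)$ to simplify the ratio of Gamma functions. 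Your approach instead splits the uniform law on $B_p^d$ into an independent radial part (contributing $d/(d+2)$) and the cone measure on $\partial B_p^d$, then invokes the Schechtman--Zinn representation to reduce the angular factor to a single moment of a $\mathrm{Beta}(1/p,(d-1)/p)$ variable. Your method is more probabilistic and sidesteps the explicit $\ell_p$-ball volume formula (it is packaged inside the Schechtman--Zinn lemma), while the paper's method is more elementary, using only Fubini and one substitution; in fact the paper's argument is essentially the generalized-Dirichlet computation you mention as an alternative. The remaining items---the $p=\infty$ case via coordinate independence, the identification $r_{p,d}=d^{1/p}$ through the all-ones vertex, and the homogeneity $\mathbb{E}_2^2(rB_p^d)=r^2\mathbb{E}_2^2(B_p^d)$---are handled the same way in both proofs.
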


\begin{figure}[t]
        \centering
        \includegraphics[scale=0.6]{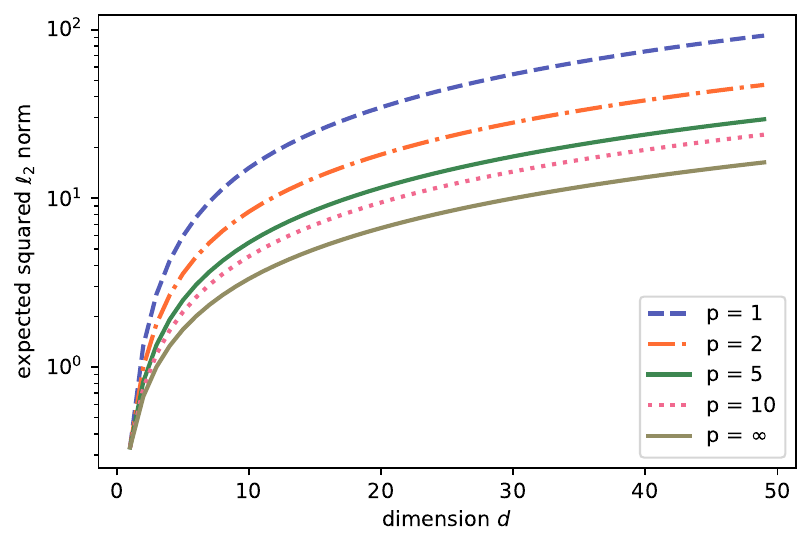}
        \caption{$r_{p,d}^2\mathbb{E}_2^2(B_p^d)$ (see \Cref{lem:expected_squared_norm}).}
    \label{fig:lp_balls}
\end{figure}

Remark 4.2 from~\citet{HT10} establishes that, fixing privacy parameters, the expected squared $\ell_2$ norm of a sample from the $K$-norm mechanism is proportional to the expected squared $\ell_2$ norm of its norm ball. Thus, to identify the best $\ell_p$-norm mechanism, it suffices to identify the $p$ minimizing $\mathbb{E}_{2}^{2}(r_{p,d}B_{p}^{d})$ in \Cref{lem:expected_squared_norm}.

\Cref{fig:lp_balls} plots the analytic expression derived in \Cref{lem:expected_squared_norm} for various $p$ and $d$ and leads us to take the $\ell_\infty$-norm mechanism as the strongest baseline. We note briefly that Laplace ($\ell_1$) noise is approximately 2 ($d=2$) to 5 ($d=50$) times worse than $\ell_\infty$ noise by squared $\ell_2$ norm, so the following results are even stronger when evaluated against the Laplace mechanism.

\subsection{Random Posets}
\label{subsec:random_posets}
Our first set of experiments features random posets. The posets are generated using the algorithm given by~\citet{MDB01} for uniformly sampling a directed acyclic graph on a fixed number of uniquely labeled vertices. As their algorithm has runtime $O(d^4)$, we restrict attention to relatively small values for the number of poset elements $d$. 

The first random poset experiment examines our algorithm's improvement over the $\ell_\infty$ mechanism as $d$ grows. \Cref{fig:random_plots_1} demonstrates that our algorithm's advantage in terms of expected squared $\ell_2$ norm widens with $d$, increasing to over $90\%$ at $d=40$.

\begin{figure}[h]
        \centering
        \includegraphics[scale=0.6]{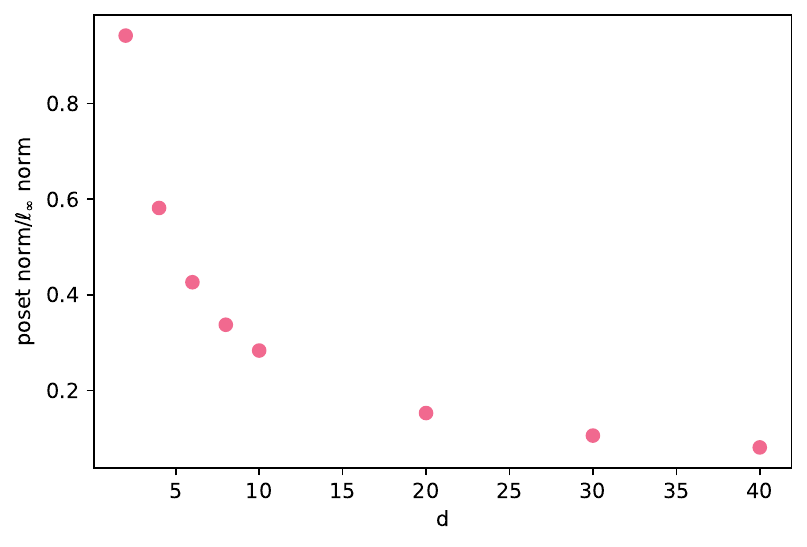}
        \caption{Each point records the average mean squared $\ell_2$ norm ratio between the poset and $\ell_\infty$ balls over 100 trials. A lower value means our mechanism achieves lower error.}
        \label{fig:random_plots_1}
\end{figure}

The second and third random poset experiments break this data out along two dimensions (\Cref{fig:random_plots_2}). Fixing $d=10$ (ignoring the root), we plot improvement as a function of both the depth (longest chain of relations) and number of relations in the poset. Both numbers are computed on the transitive reduction of the directed acyclic graph corresponding to the poset, i.e., the directed acyclic graph that minimizes the number of edges while preserving all paths of the poset. As observed for $d$, at a high level we find that larger and richer poset structures lead to correspondingly more dramatic improvements, up to a 4x reduction in error.

\begin{figure}[h]
        \centering
        \includegraphics[scale=0.49]{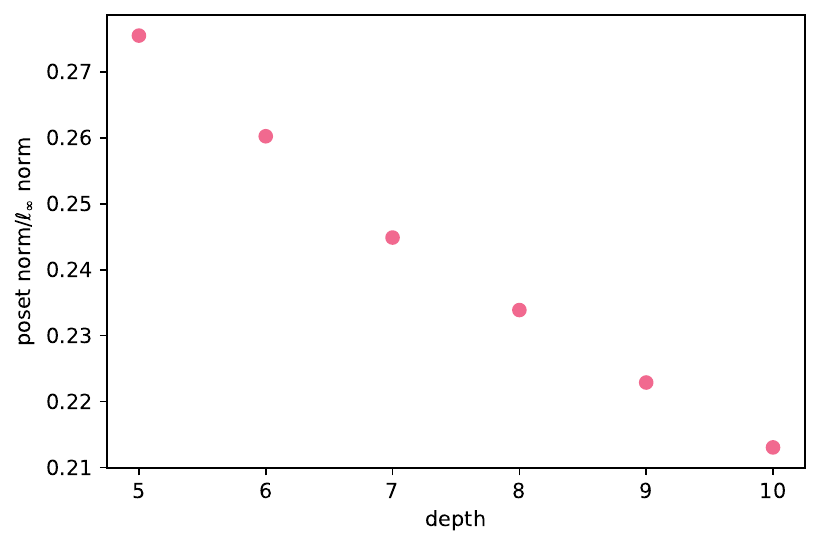}
        \includegraphics[scale=0.49]{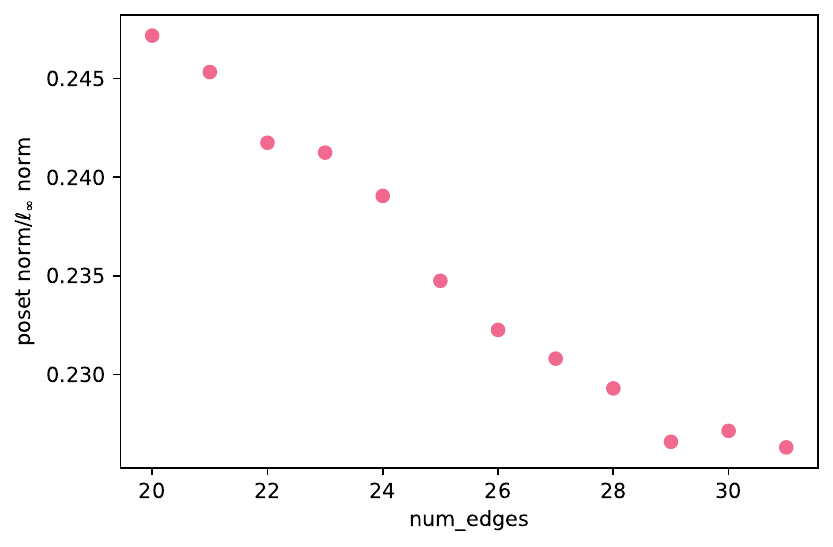}
        \caption{Each point records the average mean squared $\ell_2$ norm ratio between the poset and $\ell_\infty$ balls over at least 100 trials. We omit extremes of both depth and number of edges, as they do not occur enough times in our overall sample of 5000 random posets to cross the 100-sample threshold.}
        \label{fig:random_plots_2}
\end{figure}

\subsection{National Health Interview Survey}
\label{subsec:nhis}
The second set of experiments uses the National Health Interview Survey (NHIS)~\citep{N24}. As described in the introduction, the survey includes or omits certain questions depending on previous answers. This induces a poset, and our experiments use either the first, first two, or first three sections of the survey (Hypertension, Cholesterol, and Asthma). The resulting posets have size from $d=4$ to $d=15$. As shown in \Cref{fig:nhis}, our mechanism more than halves the error of the baseline mechanism in each setting.

\begin{figure}[h]
    \begin{center}
    \begin{tabular}{| c | c |}
        \hline
        \# survey sections & poset ball squared $\ell_2$ norm / $l_\infty$ ball squared $\ell_2$ norm \\
        \hline
        1 & 0.414 \\
        \hline
        2 & 0.427 \\
        \hline
        3 & 0.408 \\
        \hline
    \end{tabular}
    \caption{NHIS average mean squared $\ell_2$ norm ratios, from 10,000 trials each.}
    \label{fig:nhis}
    \end{center}
\end{figure}

We note that, in a departure from the random poset experiments, here our mechanism's advantage remains fairly constant as $d$ grows. This may be a consequence of the NHIS's shallow poset structure: it contains only one chain of more than two relations (the first three questions in the Hypertension section), and as demonstrated in \Cref{fig:random_plots_2}, our mechanism's advantage grows with poset depth.

\section{Discussion}
\label{sec:discussion}
The material above demonstrates that the $K$-norm mechanism offers strong utility improvements over existing pure differentially private mechanisms for privately counting partially ordered data without sacrificing efficiency. A natural direction for future work is to consider the same problem under approximate differential privacy. There, elliptic Gaussian noise is the primary tool, and the geometric problem changes from sampling the induced norm ball to computing the ``smallest'' (in terms of expected squared $\ell_2$ norm) ellipse enclosing the induced norm ball. Unfortunately, while~\citet{JY24} solved this problem for count and vote, their solutions relied on the norm balls being symmetric around the vector $(1, \ldots, 1)$, and this does not hold for the poset ball in general. This suggests that new techniques may be required to obtain a similar result for partially ordered data.

\newpage

\bibliographystyle{plainnat}
\bibliography{references}

\begin{thebibliography}{16}
\providecommand{\natexlab}[1]{#1}
\providecommand{\url}[1]{\texttt{#1}}
\expandafter\ifx\csname urlstyle\endcsname\relax
  \providecommand{\doi}[1]{doi: #1}\else
  \providecommand{\doi}{doi: \begingroup \urlstyle{rm}\Url}\fi

\bibitem[Awan and Slavkovi{\'c}(2021)]{AS21}
Jordan Awan and Aleksandra Slavkovi{\'c}.
\newblock \href{https://arxiv.org/abs/1801.09236}{Structure and sensitivity in
  differential privacy: Comparing k-norm mechanisms}.
\newblock \emph{Journal of the American Statistical Association}, 2021.

\bibitem[Chappell et~al.(2017)Chappell, Friedl, and Sanyal]{CFS17}
Thomas Chappell, Tobias Friedl, and Raman Sanyal.
\newblock \href{https://arxiv.org/abs/1606.04938}{Two double poset polytopes}.
\newblock \emph{SIAM Journal on Discrete Mathematics}, 2017.

\bibitem[Dwork et~al.(2006)Dwork, McSherry, Nissim, and Smith]{DMNS06}
Cynthia Dwork, Frank McSherry, Kobbi Nissim, and Adam Smith.
\newblock
  \href{https://people.csail.mit.edu/asmith/PS/sensitivity-tcc-final.pdf}{Calibrating
  noise to sensitivity in private data analysis}.
\newblock In \emph{Theory of Cryptography Conference (TCC)}, 2006.

\bibitem[Edmonds et~al.(2020)Edmonds, Nikolov, and Ullman]{ENU20}
Alexander Edmonds, Aleksandar Nikolov, and Jonathan Ullman.
\newblock \href{https://arxiv.org/pdf/1911.08339.pdf}{The power of
  factorization mechanisms in local and central differential privacy}.
\newblock In \emph{Symposium on the Theory of Computing (STOC)}, 2020.

\bibitem[Google(2024)]{G24}
Google.
\newblock dp\_posets.
\newblock
  \url{https://github.com/google-research/google-research/tree/master/dp_posets},
  2024.

\bibitem[Hardt and Talwar(2010)]{HT10}
Moritz Hardt and Kunal Talwar.
\newblock \href{https://arxiv.org/abs/0907.3754}{On the geometry of
  differential privacy}.
\newblock In \emph{Symposium on the Theory of Computing (STOC)}, 2010.

\bibitem[Joseph and Yu(2024)]{JY24}
Matthew Joseph and Alexander Yu.
\newblock \href{https://arxiv.org/abs/2309.15790}{Some Constructions of
  Private, Efficient, and Optimal $K$-Norm and Elliptic Gaussian Noise}.
\newblock In \emph{Conference on Learning Theory (COLT)}, 2024.

\bibitem[Kattis and Nikolov(2017)]{KN16}
Assimakis Kattis and Aleksandar Nikolov.
\newblock \href{https://arxiv.org/abs/1612.02914}{Lower bounds for differential
  privacy from gaussian width}.
\newblock In \emph{Symposium on Computational Geometry (SOCG)}, 2017.

\bibitem[Laddha et~al.(2020)Laddha, Lee, and Vempala]{LLV20}
Aditi Laddha, Yin~Tat Lee, and Santosh Vempala.
\newblock \href{https://arxiv.org/abs/1911.05656}{Strong self-concordance and
  sampling}.
\newblock In \emph{Symposium on Theory of Computing (STOC)}, 2020.

\bibitem[Li et~al.(2015)Li, Miklau, Hay, McGregor, and Rastogi]{LMHMR15}
Chao Li, Gerome Miklau, Michael Hay, Andrew McGregor, and Vibhor Rastogi.
\newblock
  \href{https://people.cs.umass.edu/~miklau/assets/pubs/dp/Li15matrix.pdf}{The
  matrix mechanism: optimizing linear counting queries under differential
  privacy}.
\newblock \emph{The VLDB Journal}, 2015.

\bibitem[McKenna et~al.(2018)McKenna, Miklau, Hay, and Machanavajjhala]{MMHM18}
Ryan McKenna, Gerome Miklau, Michael Hay, and Ashwin Machanavajjhala.
\newblock \href{http://www.vldb.org/pvldb/vol11/p1206-mckenna.pdf}{Optimizing
  Error of High-Dimensional Statistical Queries under Differential Privacy}.
\newblock \emph{The VLDB Journal}, 2018.

\bibitem[Melan{\c{c}}on et~al.(2001)Melan{\c{c}}on, Dutour, and
  Bousquet-M{\'e}lou]{MDB01}
Guy Melan{\c{c}}on, Isabelle Dutour, and Mireille Bousquet-M{\'e}lou.
\newblock
  \href{https://www.sciencedirect.com/science/article/pii/S1571065304003944}{Random
  generation of directed acyclic graphs}.
\newblock \emph{Electronic Notes in Discrete Mathematics}, 2001.

\bibitem[Nikolov(2023)]{N23B}
Aleksandar Nikolov.
\newblock \href{https://arxiv.org/abs/2208.07410}{Private query release via the
  johnson-lindenstrauss transform}.
\newblock In \emph{Symposium on Discrete Algorithms (SODA)}, 2023.

\bibitem[Nikolov and Tang(2023)]{NT23}
Aleksandar Nikolov and Haohua Tang.
\newblock \href{https://arxiv.org/abs/2301.13850}{Gaussian Noise is Nearly
  Instance Optimal for Private Unbiased Mean Estimation}.
\newblock \emph{arXiv preprint arXiv:2301.13850}, 2023.

\bibitem[Nikolov et~al.(2013)Nikolov, Talwar, and Zhang]{NTZ13}
Aleksandar Nikolov, Kunal Talwar, and Li~Zhang.
\newblock \href{https://arxiv.org/abs/1212.0297}{The geometry of differential
  privacy: the sparse and approximate cases}.
\newblock In \emph{Symposium on the Theory of Computing (STOC)}, 2013.

\bibitem[Services and Medicaid(2024)]{N24}
Center for~Medicare Services and Medicaid.
\newblock National health interview survey.
\newblock
  \url{https://www.cms.gov/about-cms/agency-information/omh/resource-center/hcps-and-researchers/data-tools/sgm-clearinghouse/nhis},
  2024.

\end{thebibliography}

\appendix
\section{Sampler Proofs}
\label{sec:appendix_sampler}

\begin{remark}
\label{remark:2}
    The exposition from \citet{CFS17} defines the double order polytope of a double poset $(P, \preceq_{+}, \preceq_{-})$ as $\mathcal{O}(2P_{+}) \boxminus \mathcal{O}(2P_{-})$, defines the double chain polytope as $\mathcal{C}(2P_{+}) \boxminus \mathcal{C}(2P_{-})$ and
    for a pair of chains $C = (C_{+}, C_{-})$ it defines $\bar{F}(C) = 2F(C_{+}) \boxminus 2F(C_{-})$, and $\bar{F'}(C) = 2F'(C_{+}) \boxminus 2F'(C_{-})$. Removing these factors of 2 from the definitions does not change the combinatorial structure of the double order polytope or double chain polytope since this removal simply scales down the $(d-1)$ dimensions of $\mathbb{R}^{d}$ that excludes the dimension that is added from the $\boxminus$ operation. Consequently, the triangulation of \cref{lem:homeomorphism_2} still holds. The reason we remove the factors of 2 is because the sensitivity space we are interested in is $CH(\mathcal{O}(P^*) \cup -\mathcal{O}(P^*))$ which is equivalent to our definition of the double order polytope $\mathcal{O}_{2}(P^* - r)$ where $r$ is the global maximum of $P$.
\end{remark}

\reflectedPolytopeVertices*
\begin{proof}
We have $V(CH(\mathcal{P}' \cup -\mathcal{P}')) = V(CH(V(\mathcal{P}') \cup V(-\mathcal{P}')))
\subseteq V(\mathcal{P}') \cup V(-\mathcal{P}') = V(\mathcal{\mathcal{P}}) \cup V(-\mathcal{P}) \cup \{0\}$ where the $\subseteq$ comes from the fact that the vertices of the convex hull of a finite set is a subset of that set. Since $CH(\mathcal{P}' \cup -\mathcal{P}')$ has antipodal symmetry, then $x \in CH(\mathcal{P}' \cup -\mathcal{P}')$ implies $-x \in CH(\mathcal{P}' \cup -\mathcal{P}')$ which means that the line segment between $x$ and $-x$ contains a small 1-dim neighborhood $B$ around the origin such that $B \subset CH(\mathcal{P}' \cup -\mathcal{P}')$, so the origin is not a vertex of $CH(\mathcal{P}' \cup -\mathcal{P}')$, so we have that $V(CH(\mathcal{P}' \cup -\mathcal{P}')) \subset V(\mathcal{P}) \cup V(-\mathcal{P})$.

Now, suppose that $v \in V(\mathcal{P})$ is not a vertex of $CH(\mathcal{P}' \cup -\mathcal{P}')$. We call a convex combination of vertices non-singleton if its support contains at least 2 vertices. The supposition implies there is a non-singleton convex combination of vertices in $V(\mathcal{P}') \cup V(-\mathcal{P}')$ that equals $v$. None of vertices in the support of this combination can lie in $V(-\mathcal{P}')$ because then the $x_1$ coordinate of the combination would be less than 1 whereas the first coordinate of $v$ is 1. So $v$ must be a non-singleton convex combination of points in $V(\mathcal{P})$, contradicting the fact that $v$ is a vertex of $V(\mathcal{P})$. So $V(\mathcal{P}) \subset V(CH(\mathcal{P}' \cup -\mathcal{P}'))$ and similarly $V(-\mathcal{P}) \subset V(CH(\mathcal{P}' \cup -\mathcal{P}'))$. But by the previous paragraph, $V(CH(\mathcal{P}' \cup -\mathcal{P}')) \subset V(\mathcal{P}) \cup V(-\mathcal{P})$, so $V(CH(\mathcal{P}' \cup -\mathcal{P}')) = V(\mathcal{P}) \cup V(-\mathcal{P})$.
\end{proof}

\section{Experiment Proofs}
\label{sec:appendix_experiment}

We start with a simple result about the expected squared $\ell_2$ norm, $\mathbb{E}_{2}^{2}$. Recall from \Cref{subsec:baseline} that $\mathbb{E}_{2}^{2}(X)$ denotes the expected squared $\ell_2$ norm of a uniform sample from $X$. 

\begin{lemma}
\label{lem:E is additive for direct sums}
    Then $\mathbb{E}_{2}^{2}(X \oplus Y) = \mathbb{E}_{2}^{2}(X) + \mathbb{E}_{2}^{2}(Y)$.
\end{lemma}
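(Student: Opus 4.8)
The plan is to unwind the definition of $\mathbb{E}_2^2$ and observe that, for a direct sum (Cartesian product placed in orthogonal coordinate blocks), both the uniform measure and the squared $\ell_2$ norm decompose along the two blocks. Concretely, write $X \subseteq \mathbb{R}^m$, $Y \subseteq \mathbb{R}^n$, so that $X \oplus Y \subseteq \mathbb{R}^{m+n}$ consists of the points $(x,y)$ with $x \in X$, $y \in Y$, and the first $m$ coordinates span a subspace orthogonal to the last $n$. I would first record that $\mathrm{vol}_{m+n}(X \oplus Y) = \mathrm{vol}_m(X)\cdot \mathrm{vol}_n(Y)$ by Tonelli, so that the normalized indicator density of $X \oplus Y$ factors as the product of the normalized indicator densities of $X$ and of $Y$; equivalently, a uniform sample from $X \oplus Y$ can be realized as $(U,V)$ with $U$ uniform on $X$ and $V$ uniform on $Y$ (independence is automatic but not even needed below).

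Next I would use orthogonality of the two coordinate blocks to write $\|(x,y)\|_2^2 = \|x\|_2^2 + \|y\|_2^2$ pointwise on $X \oplus Y$. Taking expectations over the uniform sample $(U,V)$ and applying linearity of expectation gives
\begin{equation*}
\mathbb{E}_2^2(X \oplus Y) = \mathbb{E}\big[\|U\|_2^2 + \|V\|_2^2\big] = \mathbb{E}\big[\|U\|_2^2\big] + \mathbb{E}\big[\|V\|_2^2\big].
\end{equation*}
Since the $X$-marginal of $(U,V)$ is uniform on $X$ and the $Y$-marginal is uniform on $Y$ (by the factorization of the density above, via Tonelli again), the two terms on the right are exactly $\mathbb{E}_2^2(X)$ and $\mathbb{E}_2^2(Y)$, which is the claim.

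There is essentially no hard step here; the only point that requires a word of justification rather than pure symbol-pushing is the factorization of the uniform measure on $X \oplus Y$ into the product of uniform measures, which is a direct application of Tonelli's theorem to the indicator of a product set. Everything else is linearity of expectation together with the Pythagorean identity for orthogonal coordinate blocks. I would keep the write-up to a few lines, since this lemma is only used downstream (e.g., to evaluate $\mathbb{E}_2^2(B_\infty^d) = d/3$ by decomposing the cube as a direct sum of intervals) in \Cref{lem:expected_squared_norm}.
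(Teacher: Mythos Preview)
Your proposal is correct and follows essentially the same approach as the paper: factor the uniform distribution on $X \oplus Y$ as independent uniforms on $X$ and $Y$, use the Pythagorean identity $\|x \oplus y\|_2^2 = \|x\|_2^2 + \|y\|_2^2$ on orthogonal coordinate blocks, and conclude by linearity of expectation. The paper's proof is a two-line version of exactly this argument, so your write-up simply supplies more detail (the Tonelli justification for the factorization) than the paper does.
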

\begin{proof}
    To uniformly sample $X \oplus Y$, we can uniformly sample $x \in X, y \in Y$ and then return $x \oplus y$. Samples $x$ and $y$ embed into disjoint coordinates of $x \oplus y$, so $\|x \oplus y\|_{2}^{2} = \|x\|_{2}^{2} + \|y\|_{2}^{2}$.
\end{proof}

\expectedSquaredNorm*
\begin{proof}
We use the fact that $\int_{0}^{1}t^{a-1}(1-t)^{b-1} = \frac{\Gamma(a)\Gamma(b)}{\Gamma(a+b)}$. Then

\begin{flalign*}
\mathbb{E}_{2}^{2}(B_{p}^{d}) &= |B_{p}^{d}|^{-1}\int_{|x_1|^{p} + ... + |x_{d}|^{p} \leq 1} \sum_{i=1}^{d}x_{i}^{2}\partial x_{1}...\partial x_{d} \\
&= |B_{p}^{d}|^{-1}\int_{|x_1|^{p} + ... + |x_{d}|^{p} \leq 1} dx_{1}^{2}\partial x_{1}...\partial x_{d} \\
&= |B_{p}^{d}|^{-1}\int_{-1}^{1}dx_{1}^{2}\partial x_{1}\int_{|x_2|^{p} + ... + |x_{d}|^{p} \leq 1 - |x_{1}|^{p}} \partial x_{2}...\partial x_{d} \\
&= |B_{p}^{d}|^{-1}\int_{-1}^{1}dx_{1}^{2}|B_{p}^{d-1}|(1-|x_{1}|^{p})^{\frac{d-1}{p}}\partial x_{1} \\
&= 2d|B_{p}^{d}|^{-1}|B_{p}^{d-1}|\int_{0}^{1}x^{2}(1-x^{p})^{\frac{d-1}{p}} \partial x.
\end{flalign*}
We substitute $y = x^p$ to get
\begin{flalign*}
\partial y &= px^{p-1} \partial x \\
\partial x &= \frac{1}{px^{p-1}} \partial y \\
\partial x &= \left(\frac{1}{p}\right)y^{\frac{1}{p} - 1} \partial y
\end{flalign*}
and the chain of equalities continues with
\begin{flalign*}
\mathbb{E}_{2}^{2}(B_{p}^{d}) &= \left(\frac{2}{3}\right)d|B_{p}^{d}|^{-1}|B_{p}^{d-1}|\int_{0}^{1}y^{\frac{2}{p}}(1-y)^{\frac{d-1}{p}}\left(\frac{3}{p}\right)y^{\frac{1}{p} - 1} \partial y \\
&= \left(\frac{2}{3}\right)d|B_{p}^{d}|^{-1}|B_{p}^{d-1}|\left(\frac{3}{p}\right)\int_{0}^{1}y^{\frac{3}{p}-1}(1-y)^{\frac{d-1}{p}} \partial y \\
&= \left(\frac{2}{3}\right)d|B_{p}^{d}|^{-1}|B_{p}^{d-1}|\left(\frac{3}{p}\right)\left(\frac{\Gamma\left(\frac{3}{p}\right)\Gamma(\frac{d-1}{p}+1)}{\Gamma(\frac{d+2}{p} + 1)}\right) \\
&= \left(\frac{2}{3}\right)d\left(\frac{(2\Gamma(\frac{1}{p}+1))^{d-1}}{\Gamma(\frac{d-1}{p} + 1)}\right)\left(\frac{\Gamma(\frac{d}{p} + 1)}{(2\Gamma(\frac{1}{p} + 1))^{d}}\right)\left(\frac{\Gamma(\frac{3}{p} + 1)\Gamma(\frac{d-1}{p}+1)}{\Gamma(\frac{d+2}{p} + 1)}\right) \\
&= \left(\frac{2}{3}\right)d\left(\frac{\Gamma(\frac{d}{p} + 1)}{2\Gamma(\frac{1}{p} + 1)}\right)\left(\frac{\Gamma(\frac{3}{p} + 1)}{\Gamma(\frac{d+2}{p} + 1)}\right) \\
&= \frac{d}{3}\left(\frac{3d}{d+2}\right)\left(\frac{\Gamma(\frac{d}{p}) \Gamma\left(\frac{3}{p}\right)}{\Gamma(\frac{1}{p}) \Gamma(\frac{d+2}{p})}\right).
\end{flalign*}

For $p=\infty$, we use
    \begin{align*}
        \mathbb{E}_{2}^{2}(B_{\infty}^{d}) = \mathbb{E}_{2}^{2}([0,1]^{d}) = d\mathbb{E}_{2}^{2}([0,1]) = d\int_{0}^{1}t^{2}dt = \frac{d}{3}
    \end{align*}
where the first equality comes from the fact that the expectation restricted to the positive orthant does not change the expectation value because of symmetry over the orthants, and the second equality comes from \Cref{lem:E is additive for direct sums}.

We show that $r_{p,d} = d^{1/p}$. Note that $r_{p,d}$ is equal to the smallest radius such that $r_{p,d}B_{p}^{d}$ contains the point $(1,...,1) \in \mathcal{Y}(P^{*})$ because if $(1,...,1) \in r_{p,d}B_{p}^{d}$ then all other binary vectors of length $d$ are also contained in $r_{p,d}B_{p}^{d}$, i.e. $\mathcal{Y}(P^{*}) \subset r_{p,d}B_{p}^{d}$.

Next, we show that $\mathbb{E}_{2}^{2}(r_{p,d}B_{p}^{d}) = r_{p,d}^{2}\mathbb{E}_{2}^{2}(B_{p}^{d})$. Since $r_{p,d}B_{p}^{d} = \{r_{p,d}x: x \in B_{p}^{d}\}$,
\begin{flalign*}
\mathbb{E}_{2}^{2}(r_{p,d}B_{p}^{d}) &= |B_{p}^{d}|^{-1}\int_{|x_1|^{p} + ... + |x_{d}|^{p} \leq 1} \sum_{i=1}^{d}(r_{p,d}x_{i})^{2}\partial x_{1}...\partial x_{d} \\
&= r_{p,d}^{2}|B_{p}^{d}|^{-1}\int_{|x_1|^{p} + ... + |x_{d}|^{p} \leq 1} \sum_{i=1}^{d}x_{i}^{2}\partial x_{1}...\partial x_{d} \\
&= r_{p,d}^{2}\mathbb{E}_{2}^{2}(B_{p}^{d}).
\end{flalign*}
\end{proof}

\end{document}